\newtheorem{theorem}{Theorem}[section]
\newtheorem{corollary}[theorem]{Corollary}
\tikzstyle{scorestars}=[star, star points=5, star point ratio=2.25, draw,inner sep=1pt,anchor=center]
\DeclareMathOperator*{\argmax}{arg\,max}
\definecolor{COLOR1}{HTML}{e6194b}
\definecolor{COLOR2}{HTML}{0082c8}
\definecolor{COLOR3}{HTML}{f58231}
\definecolor{COLOR4}{HTML}{911eb4}
\def\BState{\State\hskip-\ALG@thistlm}
\newcommand{\motion}{motion}
\newcolumntype{C}[1]{>{\centering\arraybackslash}m{#1}}
\begin{document}
\title{\vspace{-.7in}Multi-Agent Learning in Network Zero-Sum Games is a Hamiltonian System}  
	\author{%
	{James P. Bailey} \phantom{and} {Georgios Piliouras}\\
	\normalsize Singapore University of Technology and Design\\
	\normalsize $\{$james\_bailey,georgios$\}$@sutd.edu.sg
}
\date{}

\maketitle
\vspace{-.2in}
\begin{abstract}  
	Zero-sum games are natural, if informal, analogues of closed physical systems where no energy/utility can enter or exit. This analogy can be extended even further if we consider zero-sum network (polymatrix) games where multiple agents interact in a closed economy. 
	Typically, (network) zero-sum games are studied from the perspective of Nash equilibria. 
	Nevertheless, this comes in contrast with the way we typically think about closed physical systems, e.g., Earth-moon systems which move perpetually along recurrent trajectories of constant energy.

	We establish a formal and robust connection between multi-agent systems and Hamiltonian dynamics -- the same dynamics that describe conservative systems  in physics. Specifically, we show that no matter the size, or network structure of such closed economies, even if agents use different online learning dynamics from the standard class of Follow-the-Regularized-Leader, they yield Hamiltonian dynamics. 
	This approach generalizes the known connection to Hamiltonians for the special case of replicator dynamics in two agent zero-sum games developed by Hofbauer \cite{Hofbauer96}. 
	Moreover, our results extend beyond zero-sum settings and provide a type of a Rosetta stone (see e.g. Table \ref{tab:physics}) that  helps to  translate results and techniques  between online optimization, convex analysis,  games theory, and physics.  
\end{abstract}

\vspace{-.5in}
\begin{center}
	\begin{table*}[!b]
		\centering
		\begin{tabular}{|C{2.90in}  |C{2.90in}|}
			\hline
			\textbf{Mass on a Spring} & \textbf{Matching Pennies Gradient Descent}\\
			\hline
			\begin{tikzpicture}
			\draw[decoration={aspect=0.3, segment length=3mm, amplitude=5mm,coil},decorate] (0,0) -- (2.5,0); 
			\fill[white] (2.9,0) circle(.4);
			\shade[ball color = gray!40, opacity = 0.4] (2.9,0) circle (.4);
			\draw (2.9,0) circle (.4);
			\fill[gray, opacity=.4] (0,-.8)--(0,.8)--(-.1,.8)--(-.1,-.8)--cycle;
			\draw (0,-.8)--(0,.8)--(-.1,.8)--(-.1,-.8)--cycle;
			\node at (0,1.05) {};
			\end{tikzpicture}
			&$\left(\begin{array}{c c}x_1&1-x_1\end{array}\right)\left(\begin{array}{r r}1&-1\\-1&1\end{array}\right)\left(\begin{array}{c}x_2\\1-x_2\end{array}\right)$
			\\
			\hline
			Position $q(t)$ of Mass   & Agent 1 Strategy $x_1(t)$\\
			
			\begin{tikzpicture}[scale=.745]
			\draw (0,1)--(0,0);
			\draw (0,.5)--(4.5,.5);
			\fill (4.5,.5)--(4.3,.4)--(4.3,.6);
			\node[below] at (4.7,.45) {\footnotesize Time $t$};
			\csvreader[ head to column names,%
			late after head=\xdef\aold{\a}\xdef\bold{\b},%
			after line=\xdef\aold{\a}\xdef\bold{\b}]%
			{Position.dat}{}{%
				\draw (\aold, \bold) -- (\a,\b);
			}
			\end{tikzpicture}& 
			\begin{tikzpicture}[scale=.745]
			\csvreader[ head to column names,%
			late after head=\xdef\aold{\a}\xdef\bold{\b},%
			after line=\xdef\aold{\a}\xdef\bold{\b}]%
			{Position.dat}{}{%
				\draw (\aold, \bold) -- (\a,\b);
			}
			\draw (0,1)--(0,0);
			\draw (0,0)--(4.5,0);
			\fill (4.5,0)--(4.3,-.1)--(4.3,.1);
			\node[above] at (4.7,-.05) {\footnotesize Time $t$};
			
			\draw[dashed] (0,.5)--(4.5,.5);
			\node[left] at (0,.5) {\small $x_1^*$};
			\end{tikzpicture} \\
			\hline
			Momentum $p(t)$ of Mass  & Agent 2 Strategy $x_2(t)$\\
			\begin{tikzpicture}[scale=.745]
			\csvreader[ head to column names,%
			late after head=\xdef\aold{\a}\xdef\bold{\b},%
			after line=\xdef\aold{\a}\xdef\bold{\b}]%
			{velocity.dat}{}{%
				\draw (\aold, \bold) -- (\a,\b);
			}
			\draw (0,1)--(0,0);
			\draw (0,.5)--(4.5,.5);
			\fill (4.5,.5)--(4.3,.4)--(4.3,.6);
			\node[below] at (4.7,.45) {\footnotesize Time $t$};
			\end{tikzpicture}& 
			\begin{tikzpicture}[scale=.745]
			\csvreader[ head to column names,%
			late after head=\xdef\aold{\a}\xdef\bold{\b},%
			after line=\xdef\aold{\a}\xdef\bold{\b}]%
			{velocity.dat}{}{%
				\draw (\aold, \bold) -- (\a,\b);
			}
			\draw (0,1)--(0,0);
			\draw (0,0)--(4.5,0);
			\fill (4.5,0)--(4.3,-.1)--(4.3,.1);
			\node[above] at (4.7,-.05) {\footnotesize Time $t$};
			
			\draw[dashed] (0,.5)--(4.5,.5);
			\node[left] at (0,.5) {\small $x_2^*$};
			\end{tikzpicture} \\
			\hline
			$q(t)$ vs $p(t)$ & $x_1(t)$ vs $x_2(t)$\\
			\begin{tikzpicture}[scale=.806]
			\draw[->,>=stealth',semithick] (-90:1.2) arc[radius=1.2, start angle=-90, end angle=-210];
			\draw[->,>=stealth',semithick] (150:1.2) arc[radius=1.2, start angle=150, end angle=30];
			\draw[->,>=stealth',semithick] (30:1.2) arc[radius=1.2, start angle=30, end angle=-90];
			\draw[] (0,-1.5)--(0,1.5);
			\draw[] (-1.5,0)--(1.5,0);
			
			\end{tikzpicture}& 
			\begin{tikzpicture}[scale=.806]
			\draw[->,>=stealth',semithick] (-90:1.2) arc[radius=1.2, start angle=-90, end angle=-210];
			\draw[->,>=stealth',semithick] (150:1.2) arc[radius=1.2, start angle=150, end angle=30];
			\draw[->,>=stealth',semithick] (30:1.2) arc[radius=1.2, start angle=30, end angle=-90];
			\draw[dashed] (0,-1.5)--(0,1.5);
			\draw[dashed] (-1.5,0)--(1.5,0);
			\node[right] at (0,1.5) {$x_1^*$};
			\node[right] at (1.5,0) {$x_2^*$};
			\node[left] at (-1.5,0) {\phantom{$x_2^*$}};
			\end{tikzpicture} \\
			\hline
			Conservation of Energy & Constant Distance to Nash Equilibrium $x^*$\\
			$E=\frac{1}{2}kq^2(t)+\frac{1}{2m}p^2(t)$ & $D=\frac{1}{2}\left(x_1(t)-x_1^*\right)^2+\frac{1}{2}\left(x_2(t)-x_2^*\right)^2$\\[6pt]
			\hline
		\end{tabular}
		\caption{The physics of Matching Pennies when updated with Gradient Descent.}\label{tab:physics}
	\end{table*}
\end{center}

\newpage
\section{Introduction}



Undoubtedly equilibrium is the main notion through which games are studied and understood.
Not only are Nash equilibria the main solution concept but typically approaches that venture towards understanding the behavior of adaptive agents in games still naturally gravitate to slight modifications of the same idea. For example,  an evolutionary stable strategy \cite{smith1988evolution} is a Nash equilibrium refinement that captures attracting limit points of evolutionary learning dynamics. In the other direction, coarse correlated equilibria are generalizations of Nash equilibria with the property that the time average of the behavior of no-regret dynamics converge to them in general games \cite{young2004strategic}. 

Both of these equilibrium concepts, as well as many other axiomatics refinements or generalizations to the notion of Nash equilibrium share the same unifying attribute. 
All of these solution concepts are points. When we think of a game, i.e., an interaction of self-interested agents, we typically and almost unconsciously argue about its behavior using static points. We wish to argue that this way of thinking is limiting in a fundamental way as well as propose an alternative way of thinking about these systems.

To consider the severity of these limitations, consider the following example: Suppose that you were challenged to describe another many body interaction, e.g., the behavior of the Earth-Moon system, or the behavior of two masses connected by an ideal spring or even the behavior of coupled pendulums, but there was a catch. You were forced to output a single point of the state space as your description for this system. Clearly, this requirement is too restrictive as any meaningful model must be allowed to express non-equilibrium behavior and do so in a quantitative fashion (e.g. what is the shape, energy of these orbits for different initial conditions?). Instead, the behavior of these physical systems is captured by dynamical systems, specifically, Hamiltonian dynamics.

Interestingly, when we study  multi-agent games and learning dynamics on them, we do not find this restriction such a stifling roadblock, at least not intuitively.  Thus we implicitly make the assumption that the systems that we encounter when studying learning in games are fundamentally different from any of the above. After all what does gradient descent or multiplicative weights  in Matching Pennies have to do with an Earth-moon system? 

 In the contrast to our error prone intuition, laborious formal inquiries into the limitations of Nash equilibria, the foremost solution concept, have revealed an increasingly pessimistic picture about their applicability, at least as a general purpose model for  general-sum games. Not only are Nash equilibria hard to compute
\cite{Daskalakis06thecomplexity,chen2009settling,mehta2014constant} but they are also hard to approximate
\cite{skopalik2008inapproximability,rubinstein2015inapproximability,rubinstein2016settling}. Even their communication complexity seems  prohibitive for practical considerations \cite{hart2010long,Babichenko:2017:CCA:3055399.3055407}. Finally, a plethora of simple and even small games exists where numerous learning dynamics do not converge but instead can lead to cycles and other recurrent behavior \cite{hart2003uncoupled,daskalakis10,paperics11,Balcan12,papadimitriou2016nash,GeorgiosSODA18}. 
 What are we to make of all these  failures?

\textit{Our key take home message} in this paper is that besides these carefully documented formal failures of the Nash equilibrium, there exists a much more insidious failure of intuition about the efficacy of equilibrium, \textit{any sort of equilibrium}, as a model even for the most classic game theoretic instances. Revisiting our question, what does gradient descent or multiplicative weights  in Matching Pennies have to do with a mass system on spring? Or more generally, what is the connection between a $n$-body problem (e.g. our solar system) and a closed economy with numerous (e.g. thousands or millions) agents each possibly using different online learning algorithms from a large and diverse family of dynamics such as Follow-the-Regularized-Leader/Mirror-Descent \cite{Cesa06,hazan2016introduction}?  
 Surprisingly, we show that they are in a specific sense instantiations of the same phenomenon!  \textit{They are both  Hamiltonian systems} with notions of Hamiltonian energy that is preserved over time and which dictates the shape of the system trajectories (see Table \ref{tab:physics}). Critically, this is not merely an analogy but we present formal reductions from game theoretic dynamics to Hamiltonian systems. This opens new ways of thinking and arguing about games that circumvent altogether the need of using equilibria as a stepping stone. 
Despite Hamiltonians being an indispensable primitive in the study of most physical systems, this formal connection seems to have received little attention with few notable exceptions for special cases of this phenomenon, particularly for the case of replicator dynamics in zero-sum (as well as coordination) games \cite{Hofbauer96,Sato02042002,2018arXiv180205642B} and fictitious play \cite{Ostrovski2011,Strien2011}. We unify and generalize this understanding by exploiting connections between online learning and convex analysis. Along the way, we provide a natural interpretation about the emergence of recurrent behavior in zero-sum games for numerous online learning dynamics \cite{piliouras2014optimization,PiliourasAAMAS2014,GeorgiosSODA18,daskalakis2017training,nagarajan2018three} and divergence from Nash equilibria in discrete-time settings \cite{BaileyEC18}.  We briefly summarize our results below.




\textbf{Our contribution.}
We analyze zero-sum and coordination two agent normal-form games.  
We show that the dynamics created by applying the Follow-the-Regularized-Leader (FTRL) algorithm are Hamiltonian (Theorem \ref{thm:TwoAgent}). 
This reduction establishes the equivalence between a mass on a spring and gradient descent applied to Matching Pennies as shown in Table \ref{tab:physics}. 

Moreover, we extend these results to all network zero-sum games \cite{Cai,cai2016zero} and all coordination bipartite network games (Theorem \ref{thm:Idenity} and Corollary \ref{cor:bipartite}) demonstrating that the theory of Hamiltonian dynamics extends to an expansive set of games.
These results can even be extended to a larger class of payoff functions (Theorems \ref{thm:GeneralGame} and \ref{thm:GeneralBipartite}) which includes almost every two-agent, two-strategy, normal-form game (Corollary \ref{cor:2x2}). 
In fact, Corollary \ref{cor:2x2} only excludes trivial games where one of the agent's payout is independent of the other agent's actions.

Finally, we import some of the tools from the physics literature on Hamiltonians to unify several recent results on online learning in games. 
We show that the strategies in a network zero-sum  game stay approximately equidistant from the Nash equilibrium when updated by FTRL (Theorem \ref{thm:distance}). 
This generalizes a result from \cite{piliouras2014optimization} that was for replicator dynamics, a specific variant of FTRL.
This result follows from the ``invariant-energy'' property of Hamiltonian systems. 
Another common property of Hamiltonians is Poincar\'{e} recurrence which is formally shown for network zero-sum  games in \cite{piliouras2014optimization,GeorgiosSODA18}.
Finally, we revisit a recent result in  \cite{BaileyEC18} where it was shown that the discrete-time version of FTRL  in network zero-sum games repels strategies from the Nash equilibrium. This result follows intuitively from the fact that these discrete-time updates move along the tangent of the boundary of the  convex sublevel sets of the Hamiltonian/energy. Thus, these trajectories have non-decreasing energy (Theorem \ref{thm:increaseenergy}) and as a result move outwards towards the boundary (Corollary \ref{cor:diverge}).



\section{Preliminaries}\label{sec:Prelim}

A network game (or graphical polymatrix) consists of a set of $n$ agents ${\cal N}=\{1,2,...,n\}$ where agent $i$ has a finite set of actions ${\cal S}_i$ to choose from. 
Agent $i$ may select a \emph{mixed strategy}, $x_i$, from the $|{\cal S}_i|$-simplex ${\cal X}_i=\{x_i\in \mathbb{R}^{|{\cal S}_i|}_{\geq 0} : \sum_{s_i\in {\cal S}_i} x_{is_i}=1\}$. 
For each pair $\{i,j\}\subseteq {\cal N}$, agent $i$ has the payoff matrix $A^{(ij)}$ such that agent $i$ receives utility $x_i^\intercal A^{(ij)} x_j$ given agent $i$ and $j$'s strategies $x_i$ and $x_j$. 
This results in $n$ optimization problems where agents act strategically and independently to maximize their payouts.
\begin{equation}
\label{eqn:GeneralGame}\tag{Network Game}
\max_{x_i\in{\cal X}_i}\  x_i^\intercal \sum_{j\neq i}A^{(ij)} x_j \ \forall i\in {\cal N}
\end{equation}

A solution to a network game is a Nash equilibrium $x^*$. 
It satisfies 
\begin{align}
	x_i^* \sum_{j\neq i}A^{(ij)} x_j^* \geq x_i \sum_{j\neq i}A^{(ij)} x_j^* \ \forall x_i\in {\cal X}_i \ \forall i\in {\cal N}.\label{eqn:NashProp}
\end{align} 
Moreover, if $x_i^*$ is in the relative interior of ${\cal X}_i$ for each $i$ then it is \emph{fully-mixed} and (\ref{eqn:NashProp}) holds with equality.

A separable zero-sum multi-agent
game (zero-sum graphical game)~\cite{Cai}
is a graphical polymatrix (network) game in which 
the sum of all agent payoffs is always zero ($\sum_{i\in {\cal N}}x_i\sum_{j\neq i}A^{(ij)}x_j=0 \ \forall x\in \bigtimes {\cal X}_i$).
These are games that encode closed systems, an economy where no resources/utility can enter or exit and instead there is only exchange of utility between the agents. Although in this game formulation the edge games are not necessarily zero-sum there exists~\cite{Cai} a  (polynomial-time computable) payoff preserving transformation from every separable zero-sum multi-agent game to a pairwise constant-sum polymatrix game
 (\textit{i.e.}, a graphical polymatrix (network) game such that for each pair of agents $i,j:$
$A^{(ji)}=c_{\{j,i\}}\textbf{1}-\big(A^{(ij)}\big)^\mathrm{T}$ and $\textbf{1}$ the all-one matrix). 
 Removing the constants $c_{\{j,i\}}\textbf{1}$ from each edge game does not affect the strategic nature of the game nor the learning dynamics that we will consider.
 Hence for the rest of the paper we will focus on  games in the form $A^{(ij)}=-\left(A^{(ji)}\right)^\intercal$, i.e., polymatrix (network) zero-sum games \cite{dask09}.
 On the antipode of zero-sum games are \emph{coordination/partnership games} where at each outcome both agents get the same utility. 
  We can represent both of them using the notation $A^{(ij)}={\sigma}\big(A^{(ij)}\big)^\mathrm{T}$ where $\sigma=1$ for coordination games and $\sigma=-1$ for zero-sum games.
 

\subsection{Follow-the-Regularized-Leader (FTRL)}

In many settings, agents do not know the payoff matrices nor the Nash equilibrium. 
In such settings, agents adaptively update their strategies. 
The most well known class of algorithms for  online learning and optimization is Follow-the-Regularized-Leader. 
Given initial payoff vector $y_i(0)$, agent $i$ in (\ref{eqn:GeneralGame}) updates their strategies at time $t$ according to
\begin{equation}
\tag{FTRL}\label{eqn:FTRL}
\begin{aligned}
y_i(t)&= y_{i}(0)+\int_0^t \sum_{j\neq i}A^{(ij)}x_{j}(s) ds \\
x_i(t)&= \argmax_{x_i\in {\cal X}_i} \{\langle x_i, y_i(t)\rangle-h_i(x_i)\} 
\end{aligned}
\end{equation}
where $h_i$ is strongly convex and continuously differentiable. 
The cumulative payoff vector $y_i(t)$ indicates the cumulative payouts until time $t$ -- if agent $i$ had played strategy $s_i$ constantly since time $t=0$, then agent $i$ would receive a cumulative payout of $y_{is_i}(t)$. 

Strong convexity of $h_i$ and convexity and compactness of ${\cal X}_i$ guarantee (\ref{eqn:FTRL}) has a unique solution and therefore $x_i(t)$ is well-defined.
It is also well known that $x_i(t)=\nabla h_i^*(y_i(t))$ \cite{Shalev2012} where 
\begin{align}
h_i^*(y_i)=\max_{x_i\in {\cal X}_i} \{\langle x_i, y_i\rangle-h_i(x_i)\}
\end{align}
is the convex conjugate of $h_i$. 
The two most well-known versions of (\ref{eqn:FTRL}) are the gradient descent algorithm with $h_i(x_i)=||x_i||_2^2$ and the multiplicative weights algorithm, equivalently the replicator dynamics, with $h_i(x_i)=\sum_{s_i\in {\cal S}_i} x_{is_i}\log x_{is_i}$. 

\subsection{Hamiltonian dynamics}\label{sec:HamPhysics}

A physical system may consist of several interacting, moving objects (sometimes referred to as bodies or particles). 
To study each object, we track its position and momentum/velocity at time $t$, $q(t)$ and $p(t)$ respectively.  For simplicity we will assume that its mass is equal to one ($m=1$) so that momentum and velocity are equal.
In many systems, there exist natural laws preserving a relationship between the $q(t)$ and $p(t)$. 
Two such systems include a mass attached to a spring and a pendulum.

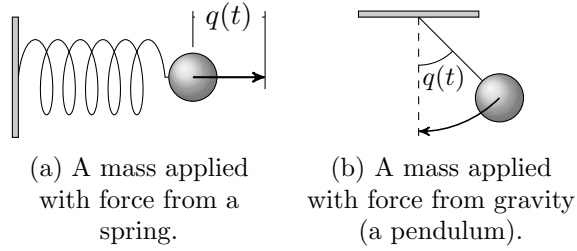
\begin{figure}[h]
	\begin{center}
		\begin{subfigure}[t]{0.22\textwidth}\centering
			\begin{tikzpicture}[scale=.8]
			\draw[decoration={aspect=0.3, segment length=3mm, amplitude=5mm,coil},decorate] (0,0) -- (2.5,0); 
			\fill[white] (2.9,0) circle(.4);
			\shade[ball color = gray!40, opacity = 0.4] (2.9,0) circle (.4);
			\draw (2.9,0) circle (.4);
			\fill[gray, opacity=.4] (0,-1)--(0,1)--(-.1,1)--(-.1,-1)--cycle;
			\draw (0,-1)--(0,1)--(-.1,1)--(-.1,-1)--cycle;
			\draw[->, >=stealth',thick] (2.9,0)--(4.1,0);
			\draw (4.1,-.2)--(4.1, 1.1);
			\draw (2.9,.5)--(2.9, 1.1);
			\draw (2.9,1)--(4.1, 1) node[midway, fill=white] {$q(t)$};
			
			\end{tikzpicture}
			\caption[]{A mass applied with force from a spring.}
		\end{subfigure}\hspace{.1in}
		\begin{subfigure}[t]{0.22\textwidth}\centering
			\begin{tikzpicture}[scale=.8]
			\draw (0,0)--(1.3435,-1.3435);
			
			\fill[gray, opacity=.4] (1,0)--(-1,0)--(-1,.1)--(1,.1)--cycle;	
			\draw (1,0)--(-1,0)--(-1,.1)--(1,.1)--cycle;	
			
			\fill[white] (1.3435,-1.3435) circle(.4);
			\shade[ball color = gray!40, opacity = 0.4] (1.3435,-1.3435) circle (.4);
			\draw (1.3435,-1.3435) circle (.4);
			
			\draw[->,>=stealth',semithick] (-45:1.9) arc[radius=1.9, start angle=-45, end angle=-90];	
			\draw[] (-45:.8) arc[radius=.8, start angle=-45, end angle=-90] ;
			\node at (-67.5:1.1) {\small $q(t)$};	
			\draw[dashed] (0,0)--(0,-2);
			
			\end{tikzpicture}
			\caption[]{A mass applied with force from gravity (a pendulum).}
		\end{subfigure}
		\caption[]{Physical systems of motion.}\label{fig:1} 
	\end{center}
\end{figure}

A well known concept from physics is ``conservation of energy''.  
If no outside force acts on the system, then energy is invariant at all times. Given a spring with constant $k$, a body with mass $m=1$, conservation of energy implies 
\begin{align}
\frac{1}{2}kq^2(0)+\frac{1}{2}p^2(0)=\frac{1}{2}kq^2(t)+\frac{1}{2}p^2(t)\tag{Energy of a Spring}\label{eqn:spring}
\end{align}
where $\frac{1}{2}kq^2(t)$ denotes the potential energy (the energy due to position) at time $t$ and $\frac{1}{2}p^2(t)$ denotes the kinetic energy (the energy due to motion) at time $t$.  

By definition, velocity is related to position with $p=\dot{q}$.
However, in some systems it is simpler to track the body's motion with something other than $\dot{q}$. 
For instance, a common way to track the pendulum is with the velocity of the body, $p(t)$, and to define the position, $q(t)$, as the angle the string forms with respect to the normal direction. 
Since $q(t)\in \mathbb{R}$  and $p(t)\in \mathbb{R}^2$, $p\neq \dot q$.
As such, we refer to $p(t)$ more generally as the \emph{motion} of the system.

Hamiltonian dynamics are an important subset of physical systems that includes systems as small as a mass on a spring or a pendulum and as large as the planetary orbits. 
Every Hamiltonian dynamic has a Hamiltonian $H(q,p)$ such that
\begin{equation}
\tag{Hamilton's equations}\label{eqn:HDyanamic}
\begin{aligned}
	\frac{\partial H}{\partial p}&= \frac{dq}{dt}=\dot{q}\\
	-\frac{\partial H}{\partial q}&= \frac{dp}{dt}=\dot{p}
\end{aligned}
\end{equation}
Hamiltonians have a variety of well known properties including conservation of energy (time invariance of the Hamiltonian) and volume preservation.  
There are even large families of integrators especially well suited for approximating solutions to system of differential equations given by (\ref{eqn:HDyanamic}).  
We discuss these properties  later in Section \ref{sec:Significance} as they relate to network games.

\section{The Physics of Two-Player Games}
{\color{black}
We begin our analysis with two player games. 
The position we will  track in FTRL is agent 1's cumulative strategy  $X_1(t)=\int_0^t x_1(s)ds$. 
Motion drives the change in position.  
As defined in (\ref{eqn:FTRL}), $y_1^t$ drives the change in $X_1^t$.  
Formally, $\frac{d}{dt} X_1(t)=x_1(t)=\nabla h^*_1(y_1(t))$ and it is natural to define $y_1(t)$ as the {\motion} of the system.

Observe that $y_2(t)=y_2(0)+\int_{0}^t A^{(21)}x_1(s)ds= y_2(0)+A^{(21)}X_1(t)$.  
Therefore tracking $y_1(t)$ and $X_1(t)$ gives us information about both agents via the perspective of agent 1.  
Symmetrically, position and motion can be defined from the perspective of agent 2 with $y_2(t)$ and $X_2(t)$ instead.
The idea of defining the system from the perspective a single agent comes naturally when we consider the dynamics in Table \ref{tab:physics}. 
The position of the mass on the spring corresponds to the agent 1's strategy and we associate the mass with agent 1.
The velocity of the mass corresponds to agent 2's strategy.
However, by associating the mass with agent 1, we have already framed our understanding of the system through agent 1's perspective. 

Given that agent strategies are often the object of interest when discussing the dynamics of games, it may be surprising that we are not tracking the actual strategy $x_i(t)$.  
However, the mapping $\nabla h_i^*: y_i \to x_i$ is surjective but not injective. 
Therefore we only gain information by tracking $y_i$ instead of $x_i$ suggesting that we can better understand the system using $y_i$.

Now that we have an understanding of the position and motion of the system, we can define the potential and kinetic energies. 
Just as gravity pulls an object toward the center of the Earth, the regularizer $h_2$ pulls $x_2(t)$ to the minimizer of $h_2$.  
Similarly, $h_2^*$ pulls $y_2(t)=y_2(0)+A^{(21)}X_1(t)$ to the minimizer of $h_2^*$ and we use $-\sigma h_2^*\left(y_2(0)+A^{(21)}X_1(t)\right)$ to represent the potential energy in the system. 
Analogously, the energy due to motion, i.e., kinetic energy, is  $h^*_1(y_1(t))$.}
Thus, the total energy in the system at time $t$ is 
\begin{align}
	H(X_1,y_1)=h_1^*(y_1(t))-\sigma h_2^*\left(y_2(0)+A^{(21)}X_1(t)\right)
\end{align}

\begin{theorem}\label{thm:2energy}
	The total energy $H(X_1,y_1)$ is invariant when a two-agent coordination or zero-sum (\ref{eqn:GeneralGame}) is updated with (\ref{eqn:FTRL}). 
\end{theorem}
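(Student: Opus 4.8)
The plan is to prove invariance by showing directly that $\frac{d}{dt}H(X_1(t),y_1(t))=0$ along every trajectory of (\ref{eqn:FTRL}), i.e.\ conservation of energy holds as a pointwise differential identity rather than being inferred from any equilibrium structure. Before differentiating I would collect the four kinematic identities already available from the preliminaries and the discussion preceding the statement: the gradient-of-conjugate identity $\nabla h_i^*(y_i)=x_i$; the position law $\dot X_1(t)=x_1(t)$; the motion law $\dot y_1(t)=A^{(12)}x_2(t)$ obtained by differentiating the integral definition of $y_1$ in (\ref{eqn:FTRL}); and the bookkeeping identity $y_2(t)=y_2(0)+A^{(21)}X_1(t)$, whose time derivative is $\dot y_2(t)=A^{(21)}x_1(t)$. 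These reduce every derivative appearing below to a product of a current strategy vector and a payoff matrix.

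Next I would differentiate $H$ term by term using the chain rule. For the kinetic term, $\frac{d}{dt}h_1^*(y_1(t))=\langle \nabla h_1^*(y_1),\dot y_1\rangle=\langle x_1, A^{(12)}x_2\rangle=x_1^\intercal A^{(12)}x_2$. For the potential term, since its argument is exactly $y_2(t)$, we get $\frac{d}{dt}h_2^*\!\left(y_2(0)+A^{(21)}X_1(t)\right)=\langle \nabla h_2^*(y_2),A^{(21)}\dot X_1\rangle=\langle x_2,A^{(21)}x_1\rangle=x_2^\intercal A^{(21)}x_1$. Combining, $\frac{dH}{dt}=x_1^\intercal A^{(12)}x_2-\sigma\,x_2^\intercal A^{(21)}x_1$.

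The cancellation is then forced by the coordination/zero-sum structure $A^{(ij)}=\sigma\big(A^{(ji)}\big)^\intercal$, which here reads $A^{(12)}=\sigma\big(A^{(21)}\big)^\intercal$. Substituting and using that a scalar equals its own transpose gives $x_1^\intercal A^{(12)}x_2=\sigma\,x_1^\intercal\big(A^{(21)}\big)^\intercal x_2=\sigma\,x_2^\intercal A^{(21)}x_1$, so the two terms coincide and $\frac{dH}{dt}=0$. Integrating, $H(X_1(t),y_1(t))=H(X_1(0),y_1(0))$ for all $t$, which is the claim.

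I do not expect a genuine obstacle here: once the Fenchel identity $\nabla h_i^*=x_i$ is invoked (already supplied in the text, and itself a consequence of strong convexity of $h_i$ guaranteeing differentiability of $h_i^*$), the argument is a one-line chain-rule computation closed by a transpose manipulation. The only points requiring care are (i) justifying differentiation under the integral / smoothness of $x_i(t)=\nabla h_i^*(y_i(t))$ in $t$, which follows from continuity of the payoff stream and differentiability of the conjugate, and (ii) keeping the transpose bookkeeping straight so that the factor $\sigma$ in the potential term is exactly matched by the $\sigma$ produced from the structural identity. The same computation will clearly be the engine behind the stronger Hamiltonian formulation (Theorem~\ref{thm:TwoAgent}), with invariance of $H$ being the conservation-of-energy corollary of that structure.
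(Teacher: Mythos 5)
Your proposal is correct and follows essentially the same route as the paper's proof: a direct chain-rule differentiation of $H$ along the trajectory, reduction of all derivatives via $\nabla h_i^*(y_i)=x_i$, $\dot X_1=x_1$, $\dot y_1=A^{(12)}x_2$, and cancellation forced by $A^{(12)}=\sigma\big(A^{(21)}\big)^\intercal$ together with the scalar-transpose identity. The only cosmetic difference is that the paper absorbs $\sigma A^{(21)}$ into $\big(A^{(12)}\big)^\intercal$ at an intermediate step, while you keep $\sigma$ explicit until the end.
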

\begin{proof} The result follows by taking the derivative of $H$ with respect to $t$.
	\begin{align}
		\frac{dH}{dt}&= \nabla h_1^*(y_1(t))^\intercal \frac{d}{dt}y_1(t) - \sigma \nabla h^*_2\left(y_2(0)+A^{(21)}X_1(t)\right)^\intercal A^{(21)}x_1(t)\\
		&= x_1(t)^\intercal A^{(12)}x_2(t)-  \nabla h^*_2(y_2(t))^\intercal \left(A^{(12)}\right)^\intercal x_1(t)\\ 
		&= x_1(t)^\intercal A^{(12)}x_2(t)-  x_2(t)^\intercal \left(A^{(12)}\right)^\intercal x_1(t)\\ 
		&=0		
	\end{align}
completing the proof of the theorem.	
	\end{proof}

\subsection{Hamiltonian dynamics of two-agent games}

Recall the mass attached to the spring from Section \ref{sec:HamPhysics}. 
Thus far, we have only discussed the position and motion of the mass as they relate to the total energy in the system and not the dynamical system created by the spring. 
However, in this case, the dynamical system is described entirely by the energy in the system.
Since the mass is $m=1$, the energy of the spring is invariant and given by
\begin{align}
	H(q,p)=\frac{1}{2}kq^2(t)+ \frac{1}{2}p^2(t).
\end{align}
Recall that  $F=\dot{p}$ where $F$ is force and $\dot{p}$ is acceleration (time derivative of momentum). 
The force enacted on a spring given position $q(t)$ is $-kq(t)$. Therefore 
\begin{align}
	\dot{p}= F= -{k}q(t)=-\frac{\partial H}{\partial q}.
\end{align}

Similarly, 
\begin{align}
	\dot{q}&= p(t)=\frac{\partial H}{\partial p}.
\end{align}

Thus the system is Hamiltonian.  Similarly, (\ref{eqn:FTRL}) is a Hamiltonian system.

\begin{theorem}\label{thm:TwoAgent}
	The dynamics (\ref{eqn:FTRL}) on a two-agent coordination or zero-sum (\ref{eqn:GeneralGame}) are Hamiltonian with invariant $H(X_1,y_1)$. 
\end{theorem}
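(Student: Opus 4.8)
The plan is to verify directly that the pair $(X_1, y_1)$, with $X_1$ playing the role of position $q$ and $y_1$ the role of motion $p$, satisfies Hamilton's equations (\ref{eqn:HDyanamic}) for the candidate Hamiltonian $H(X_1, y_1) = h_1^*(y_1) - \sigma h_2^*\left(y_2(0) + A^{(21)} X_1\right)$. Since Theorem \ref{thm:2energy} already establishes that $H$ is invariant along the dynamics (conservation of energy), the only remaining obligation is to check the two partial-derivative identities; no integration or existence argument is required.

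First I would dispatch the equation $\partial H/\partial p = \dot q$, which here reads $\partial H/\partial y_1 = \dot X_1$. Only the first summand of $H$ depends on $y_1$, so $\partial H/\partial y_1 = \nabla h_1^*(y_1)$. By the convex-conjugate identity $x_1(t) = \nabla h_1^*(y_1(t))$ recorded in the preliminaries, and since $\dot X_1 = x_1$ by the very definition of the cumulative strategy $X_1(t) = \int_0^t x_1(s)\,ds$, this equation holds immediately.

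Next I would treat the equation $-\partial H/\partial q = \dot p$, i.e. $-\partial H/\partial X_1 = \dot y_1$. Only the second summand of $H$ depends on $X_1$, and $X_1$ enters through the affine argument $y_2(0) + A^{(21)} X_1$, which by the observation $y_2(t) = y_2(0) + A^{(21)} X_1(t)$ equals $y_2(t)$. The chain rule then gives $\partial H/\partial X_1 = -\sigma (A^{(21)})^\intercal \nabla h_2^*(y_2(t)) = -\sigma (A^{(21)})^\intercal x_2(t)$, so that $-\partial H/\partial X_1 = \sigma (A^{(21)})^\intercal x_2(t)$. On the other side, differentiating the FTRL update $y_1(t) = y_1(0) + \int_0^t A^{(12)} x_2(s)\,ds$ yields $\dot y_1 = A^{(12)} x_2(t)$.

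The one substantive step is therefore to reconcile these two expressions, i.e. to show $\sigma (A^{(21)})^\intercal = A^{(12)}$. This is precisely the defining structural hypothesis $A^{(ij)} = \sigma (A^{(ji)})^\intercal$ (with $\sigma = -1$ for zero-sum and $\sigma = 1$ for coordination games), specialized to $(i,j) = (1,2)$. I expect this matching of the two payoff matrices to be the only place where the zero-sum/coordination structure enters, with everything else reducing to routine chain-rule bookkeeping and the conjugate-gradient identity; consequently the argument should be short once the position/motion correspondence is set up correctly.
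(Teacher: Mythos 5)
Your proposal is correct and follows essentially the same route as the paper's proof: verify Hamilton's equations directly by computing $\partial H/\partial y_1 = \nabla h_1^*(y_1) = x_1 = \dot X_1$ via the conjugate-gradient identity, and $-\partial H/\partial X_1 = \sigma \left(A^{(21)}\right)^\intercal x_2 = A^{(12)} x_2 = \dot y_1$ via the chain rule and the structural hypothesis $A^{(12)} = \sigma \left(A^{(21)}\right)^\intercal$. The only cosmetic difference is that you invoke Theorem \ref{thm:2energy} for invariance, whereas the paper lets it follow from the Hamiltonian structure itself; both are valid.
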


\begin{proof}The proof simply follows by taking the partial derivative of $H(X_1,y_1)$ with respect to $y_1$ and $X_1$. 
\begin{align}
	\frac{\partial H}{\partial y_1}&=\nabla h^*(y_1(t))= x_1(t)=\frac{d}{dt} X_1(t)\\
	-\frac{\partial H}{\partial X_1}&=\sigma \left( A^{(21)}\right)^\intercal \nabla h^*_2\left(y_2(0)+A^{(21)}X_2(t)\right)\\
	&= A^{(12)} \nabla h^*_2(y_2(t))\\
	&= A^{(12)} x_2^t= \frac{d}{dt} y_1(t)
\end{align}
$H(X_1,y_1)$, $X_1$, and $y_1$ satisfy (\ref{eqn:HDyanamic}).  Therefore, the system is Hamiltonian with invariant energy $H(X_1,y_1)$. 
\end{proof}

\section{The Physics of Network Games}

In this section, we extend Theorems \ref{thm:2energy} and \ref{thm:TwoAgent} to network games with an arbitrary number of agents. 
First we extend our results to coordination and zero-sum bipartite network games. 
In a bipartite network game, agents are partitioned into disjoint ${\cal N}_1$ and ${\cal N}_2$. 
For agents $\{i,j\}\subseteq {\cal N}_k$, $A^{(ij)}$ is uniformly zero, i.e., there is no interaction between agents in the same partition. 
Second, we extend Theorem \ref{thm:TwoAgent} to all zero-sum versions of (\ref{eqn:GeneralGame}). 

\subsection{Coordination and Zero-Sum Bipartite Network Games}\label{sec:bipartite}

To extend the results to bipartite network games, we study the system from the perspectives of all agents in ${\cal N}_1$ via the variables $y_i(t)$ and $X_i(t)$ for $i\in {\cal N}_1$. 
We define the energy in the system as
\begin{align}
	H(X,y)= \sum_{i \in {\cal N}_1} h^*_i(y_i(t))-\sigma \sum_{j\in {\cal N}_2} h^*_j\left(y_j(0)+\sum_{i\in {\cal N}_1} A^{(ji)}X_i(t) \right).
\end{align}

With these definitions, we can immediately extend Theorem \ref{thm:TwoAgent} to bipartite network games.

\begin{corollary}\label{cor:bipartiteinvariance}
	$H(X,y)$ is time-invariant when (\ref{eqn:FTRL}) is applied to a bipartite coordination or zero-sum (\ref{eqn:GeneralGame}).
\end{corollary}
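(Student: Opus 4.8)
The plan is to reduce the statement to the computation already carried out in the proof of Theorem \ref{thm:2energy}, replacing the single variables $y_1,X_1$ and $y_2$ by sums over the two sides of the partition. The strategy is to differentiate $H(X,y)$ in $t$ and show the two groups of terms cancel exactly, using (i) the FTRL identities $\frac{d}{dt}X_i(t)=x_i(t)$ and $\nabla h_i^*(y_i(t))=x_i(t)$, (ii) the bipartite edge structure, and (iii) the structural relation $A^{(ij)}=\sigma\big(A^{(ji)}\big)^\intercal$ common to coordination ($\sigma=1$) and zero-sum ($\sigma=-1$) games.

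First I would record the two consequences of the bipartite hypothesis that make the argument go through. Since agents in ${\cal N}_1$ have edges only to ${\cal N}_2$, for $i\in{\cal N}_1$ we have $\dot y_i(t)=\sum_{k\neq i}A^{(ik)}x_k(t)=\sum_{j\in{\cal N}_2}A^{(ij)}x_j(t)$. Dually, since agents within ${\cal N}_2$ do not interact, the cumulative payoff of $j\in{\cal N}_2$ depends only on the strategies of ${\cal N}_1$: integrating FTRL gives $y_j(t)=y_j(0)+\int_0^t\sum_{i\in{\cal N}_1}A^{(ji)}x_i(s)\,ds=y_j(0)+\sum_{i\in{\cal N}_1}A^{(ji)}X_i(t)$. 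This last identity is the crux: it says the argument of $h_j^*$ appearing inside $H$ is literally the genuine cumulative payoff $y_j(t)$, so that $\nabla h_j^*\big(y_j(0)+\sum_{i\in{\cal N}_1}A^{(ji)}X_i(t)\big)=x_j(t)$, exactly as in the two-agent case. With these substitutions the chain rule yields
\begin{align*}
\frac{dH}{dt}=\sum_{i\in{\cal N}_1}x_i(t)^\intercal\!\!\sum_{j\in{\cal N}_2}A^{(ij)}x_j(t)\;-\;\sigma\sum_{j\in{\cal N}_2}x_j(t)^\intercal\!\!\sum_{i\in{\cal N}_1}A^{(ji)}x_i(t).
\end{align*}

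To finish I would substitute $A^{(ji)}=\sigma\big(A^{(ij)}\big)^\intercal$ into the second double sum; using $\sigma^2=1$ and the fact that each term $x_j^\intercal\big(A^{(ij)}\big)^\intercal x_i$ is a scalar equal to its transpose $x_i^\intercal A^{(ij)}x_j$, that sum collapses to $\sum_{i\in{\cal N}_1,\,j\in{\cal N}_2}x_i^\intercal A^{(ij)}x_j$, which coincides term-for-term with the first sum; hence $\frac{dH}{dt}=0$. I do not expect a genuine obstacle here, as the calculation is a direct lift of Theorem \ref{thm:2energy}; the only point requiring care is the bookkeeping behind the identity $y_j(t)=y_j(0)+\sum_{i\in{\cal N}_1}A^{(ji)}X_i(t)$, which is precisely where the bipartite assumption is indispensable. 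Without it, $y_j(t)$ would also depend on the strategies of other ${\cal N}_2$-agents, the argument of $h_j^*$ in $H$ would no longer equal the true cumulative payoff, and the clean cancellation above would break.
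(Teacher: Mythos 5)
Your proof is correct, but it takes a genuinely different route from the paper's. The paper proves Corollary \ref{cor:bipartiteinvariance} (jointly with Corollary \ref{cor:bipartite}) by a game-theoretic reduction: since agents on the same side of the partition do not interact, the bipartite game is rewritten as a \emph{two-agent} game in which two meta-agents choose points in the product spaces $\bigtimes_{i\in{\cal N}_1}{\cal X}_i$ and $\bigtimes_{j\in{\cal N}_2}{\cal X}_j$; because the corresponding optimization problems are separable, (\ref{eqn:FTRL}) induces identical dynamics on both formulations, and because the proofs of Theorems \ref{thm:2energy} and \ref{thm:TwoAgent} use only convexity and compactness of the strategy sets (to guarantee $x_i(t)=\nabla h_i^*(y_i(t))$), those theorems apply verbatim to the meta-game. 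You instead inline the whole argument: you re-derive $\frac{dH}{dt}=0$ directly by the chain rule, with sums over ${\cal N}_1$ and ${\cal N}_2$ replacing the single variables of Theorem \ref{thm:2energy}. Your key identity $y_j(t)=y_j(0)+\sum_{i\in{\cal N}_1}A^{(ji)}X_i(t)$, valid only because of bipartiteness, is exactly what makes $\nabla h_j^*$ evaluate to $x_j(t)$ at the argument appearing inside $H$, and the final cancellation via $A^{(ji)}=\sigma\left(A^{(ij)}\right)^\intercal$, $\sigma^2=1$, and the scalar-transpose trick is sound. The trade-off: your computation is self-contained, pinpoints precisely where the bipartite hypothesis enters, and sidesteps the (mild but necessary) verification that FTRL commutes with the meta-agent aggregation; the paper's reduction is more economical in that a single argument simultaneously yields the full Hamiltonian structure of Corollary \ref{cor:bipartite} rather than energy invariance alone, and it exposes the more general principle -- later reused in Section 6 -- that the two-agent results hold over arbitrary convex, compact strategy spaces.
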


\begin{corollary}\label{cor:bipartite}
	The dynamics (\ref{eqn:FTRL}) on a bipartite coordination or zero-sum (\ref{eqn:GeneralGame}) are Hamiltonian with invariant $H(X,y)$. 
\end{corollary}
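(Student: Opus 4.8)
The plan is to verify that the positions $(X_i)_{i \in \mathcal{N}_1}$ and the motions $(y_i)_{i \in \mathcal{N}_1}$, together with the energy $H(X,y)$, satisfy Hamilton's equations (\ref{eqn:HDyanamic}), exactly mirroring the proof of Theorem \ref{thm:TwoAgent} but with the single agents $1$ and $2$ replaced by the partitions $\mathcal{N}_1$ and $\mathcal{N}_2$. Since $H$ must serve as the Hamiltonian for every coordinate pair $(X_i,y_i)$, I would check the two identities $\partial H/\partial y_i = \dot{X}_i$ and $-\partial H/\partial X_i = \dot{y}_i$ separately for each $i \in \mathcal{N}_1$.

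For the first identity, $H$ depends on $y_i$ only through the single term $h_i^*(y_i(t))$ in the first sum, because the quantities $y_j(0)$ appearing in the second sum are fixed initial conditions. Hence $\partial H/\partial y_i = \nabla h_i^*(y_i(t)) = x_i(t)$, which equals $\dot{X}_i(t)$ by the definition $X_i(t) = \int_0^t x_i(s)\,ds$. This step is immediate.

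The second identity is where the structure of the game enters, and is the main point to get right. Here $H$ depends on $X_i$ only through the inner sums $\sum_{i' \in \mathcal{N}_1} A^{(ji')} X_{i'}(t)$ nested inside the outer sum over $j \in \mathcal{N}_2$. Applying the chain rule gives $-\partial H/\partial X_i = \sigma \sum_{j \in \mathcal{N}_2} (A^{(ji)})^\intercal \nabla h_j^*\!\big(y_j(0) + \sum_{i' \in \mathcal{N}_1} A^{(ji')} X_{i'}(t)\big)$. The key observation, generalizing the relation $y_2(t) = y_2(0) + A^{(21)}X_1(t)$ used in the two-agent case, is that in a bipartite game every $j \in \mathcal{N}_2$ satisfies $y_j(t) = y_j(0) + \sum_{i' \in \mathcal{N}_1} A^{(ji')} X_{i'}(t)$, since $A^{(ji')} = 0$ whenever $j$ and $i'$ lie in the same partition. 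Substituting this and using $\nabla h_j^*(y_j(t)) = x_j(t)$ collapses the expression to $\sigma \sum_{j \in \mathcal{N}_2} (A^{(ji)})^\intercal x_j(t)$.

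Finally, I would invoke the coordination/zero-sum relation $A^{(ij)} = \sigma(A^{(ji)})^\intercal$ to rewrite this as $\sum_{j \in \mathcal{N}_2} A^{(ij)} x_j(t)$, and then use the bipartite structure once more—now $A^{(ij)} = 0$ for $j \in \mathcal{N}_1$—to extend the sum to all $j \neq i$, yielding $\sum_{j \neq i} A^{(ij)} x_j(t) = \dot{y}_i(t)$ by (\ref{eqn:FTRL}). Thus $-\partial H/\partial X_i = \dot{y}_i$ for every $i \in \mathcal{N}_1$, both of Hamilton's equations hold, and the system is Hamiltonian; time-invariance of $H$ is already supplied by Corollary \ref{cor:bipartiteinvariance}. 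The only genuine obstacle is bookkeeping—correctly tracking the transposes and the sign $\sigma$ while differentiating through the nested sums—but no idea beyond Theorem \ref{thm:TwoAgent} is required.
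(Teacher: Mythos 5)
Your proof is correct, but it follows a genuinely different route from the paper's. You verify Hamilton's equations directly, one conjugate pair at a time: for each $i\in\mathcal{N}_1$ you check $\partial H/\partial y_i=\nabla h_i^*(y_i(t))=x_i(t)=\dot{X}_i$, and then, via the chain rule, the bipartite identity $y_j(t)=y_j(0)+\sum_{i'\in\mathcal{N}_1}A^{(ji')}X_{i'}(t)$ for $j\in\mathcal{N}_2$, and the relation $A^{(ij)}=\sigma\left(A^{(ji)}\right)^\intercal$ with $\sigma^2=1$, you obtain $-\partial H/\partial X_i=\sum_{j\in\mathcal{N}_2}A^{(ij)}x_j(t)=\sum_{j\neq i}A^{(ij)}x_j(t)=\dot{y}_i$; your transpose and sign bookkeeping checks out. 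The paper instead proves the corollary by reduction: because agents in the same partition do not interact, the bipartite game is rewritten as a two-agent game between two meta-agents whose strategy sets are the products of the $\mathcal{X}_i$ over each partition, FTRL produces identical trajectories in the two formulations because the induced optimization problems are separable, and Theorems \ref{thm:2energy} and \ref{thm:TwoAgent} then apply verbatim after observing that their proofs only use convexity and compactness of the strategy space. The underlying computation is the same---your coordinate-wise derivatives are exactly what the meta-agent derivatives decompose into---but the two presentations buy different things: yours is self-contained and makes explicit the two places where the bipartite hypothesis enters (writing $y_j(t)$ for $j\in\mathcal{N}_2$ in terms of the positions of $\mathcal{N}_1$ alone, and extending the sum over $j\in\mathcal{N}_2$ to all $j\neq i$), while the paper's reduction establishes the stronger and reusable fact that the two-agent theorems hold over arbitrary convex, compact strategy spaces, which the paper exploits again for Theorems \ref{thm:GeneralGame} and \ref{thm:GeneralBipartite}. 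One minor simplification: you need not cite Corollary \ref{cor:bipartiteinvariance} for the invariance of $H$, since it follows from the very equations you verified, as $\frac{dH}{dt}=\sum_{i\in\mathcal{N}_1}\left(\frac{\partial H}{\partial X_i}\cdot\dot{X}_i+\frac{\partial H}{\partial y_i}\cdot\dot{y}_i\right)=0$ once $\dot{X}_i=\partial H/\partial y_i$ and $\dot{y}_i=-\partial H/\partial X_i$.
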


We prove both corollaries by reducing a bipartite network game to a two-agent game.

\begin{proof}
By generalizing the idea of a strategy space, a bipartite game can be expressed as a two-agent game. 
A bipartite game is written as
\begin{align}
	\max_{x_i\in {\cal X}_i} \left\{ x_i^\intercal \sum_{j\in {\cal N}_2} A^{(ij)} x_j  \right\} \ \forall i \in {\cal N}_1\\
	\max_{x_j\in {\cal X}_j} \left\{ x_j^\intercal \sum_{j\in {\cal N}_1} A^{(ji)} x_i  \right\} \ \forall j \in {\cal N}_2
\end{align}
Since there is no interaction between agents $i$ and $j$ when $\{i,j\}\subseteq N_k$, their optimization problem can be rewritten as 
\begin{align}
	\max_{\{x_i\}_{i\in {\cal N}_1}\in \bigtimes_{i\in {\cal N}_1}{\cal X}_i} \left\{ \sum_{i\in {\cal N}_1}x_i^\intercal \sum_{j\in {\cal N}_2} A^{(ij)} x_j  \right\}\label{eqn:3} \\
	\max_{\{x_j\}_{j\in {\cal N}_2}\in \bigtimes_{j\in {\cal N}_2}{\cal X}_j} \left\{ \sum_{j\in {\cal N}_2}x_j^\intercal \sum_{i\in {\cal N}_1} A^{(ji)} x_i  \right\} \label{eqn:4}
\end{align}
which is a two-agent game with the more general strategy space $\bigtimes_{i\in {\cal N}_k}{\cal X}_i$. 
Informally, this corresponds to two meta-agents $1$ and $2$ that simultaneously make decisions for agents in ${\cal N}_1$ and ${\cal N}_2$ respectively. 
Moreover, (\ref{eqn:FTRL}) yields the same dynamics on both formulations since the corresponding optimization problems are separable. 

Corollaries \ref{cor:bipartiteinvariance} and \ref{cor:bipartite} now follow directly from Theorems \ref{thm:2energy} and \ref{thm:TwoAgent}.  
First, observe that the proofs of Theorems \ref{thm:2energy} and \ref{thm:TwoAgent} only makes use of the fact that the strategy space is convex and compact to obtain the equality $x_i(t)=\nabla h^*_i(y_i(t))$.  
Thus, both theorems extend to games with arbitrary convex, compact strategy spaces and therefore to the bipartite network game described in (\ref{eqn:3}-\ref{eqn:4}).
\end{proof}

\subsection{Zero-Sum Network Games}\label{sec:GeneralNetwork}

Unlike two-agent and bipartite network games, there is no clear side/perspective to study the dynamics from in general network games. 
As result, we simultaneously examine all perspectives via  $X_i$ and $y_i$ for all $i\in {\cal N}$.  
This yields the energy function
\begin{equation}
H(X,y)= \sum_{i\in {\cal N}} h^*_i\left( y_i(t)\right) -\sigma \sum_{j\in {\cal N}}h^*_j\left(y_j(0)+\sum_{i\neq j} A^{(ji)}X_i(t)       \right).
\end{equation}

\begin{theorem}\label{thm:energy}
	The total energy $H(X,y)$ is invariant when a zero-sum (\ref{eqn:GeneralGame}) is updated with (\ref{eqn:FTRL}). 
\end{theorem}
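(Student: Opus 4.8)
The plan is to mimic the proof of Theorem \ref{thm:2energy}, but now tracking every agent's position $X_i$ and motion $y_i$ simultaneously rather than working from a single agent's perspective. First I would differentiate $H(X,y)$ with respect to $t$ using the chain rule, feeding in the quantities supplied by (\ref{eqn:FTRL}): $\frac{d}{dt}X_i(t)=x_i(t)$, $\frac{d}{dt}y_i(t)=\sum_{j\neq i}A^{(ij)}x_j(t)$, and the identity $x_i(t)=\nabla h_i^*(y_i(t))$, which holds because each $h_i$ is strongly convex on the convex compact simplex ${\cal X}_i$.

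The crucial observation — the general-network analogue of the identity $y_2(t)=y_2(0)+A^{(21)}X_1(t)$ used in the two-agent case — is that the argument fed into the second family of conjugates is not an auxiliary variable but precisely $y_j(t)$ itself:
$$y_j(0)+\sum_{i\neq j}A^{(ji)}X_i(t)=y_j(0)+\int_0^t\sum_{i\neq j}A^{(ji)}x_i(s)\,ds=y_j(t).$$
Consequently $\nabla h_j^*$ evaluated at this point returns $x_j(t)$, and its time derivative along the trajectory is $\sum_{i\neq j}A^{(ji)}x_i(t)$. This identification is what lets the seemingly complicated second sum collapse.

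Substituting these into the chain-rule expression, the derivative reduces to
$$\frac{dH}{dt}=\sum_{i\in{\cal N}}x_i^\intercal\sum_{j\neq i}A^{(ij)}x_j-\sigma\sum_{j\in{\cal N}}x_j^\intercal\sum_{i\neq j}A^{(ji)}x_i.$$
After relabeling the dummy indices $i\leftrightarrow j$ in the second double sum, both sums become identically the total instantaneous payoff $\sum_{i\neq j}x_i^\intercal A^{(ij)}x_j$, so $\frac{dH}{dt}=(1-\sigma)\sum_{i\neq j}x_i^\intercal A^{(ij)}x_j$. For a zero-sum game ($\sigma=-1$) this total payoff is identically zero by definition of a zero-sum network game (equivalently, by the pairwise antisymmetry $A^{(ij)}=-(A^{(ji)})^\intercal$, which makes the $(i,j)$ and $(j,i)$ contributions cancel in pairs), hence $\frac{dH}{dt}=0$.

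The main thing to get right — and essentially the only place the general-network structure bites — is the index bookkeeping in the collapse above: unlike the two-agent and bipartite cases, there is no clean partition to exploit, so I must carefully verify that the two double sums over ordered pairs $(i,j)$ with $i\neq j$ genuinely coincide after relabeling, and confirm that the zero-sum hypothesis is exactly what annihilates the surviving term. Everything else is the same routine chain-rule computation as in Theorem \ref{thm:2energy}, and as there the only structural facts used are convexity and compactness of the strategy spaces (to obtain $x_i=\nabla h_i^*(y_i)$) together with the zero-sum condition.
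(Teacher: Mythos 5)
Your proposal is correct and matches the paper's approach: the paper proves Theorem \ref{thm:energy} by stating that the argument of Theorem \ref{thm:2energy} carries over identically, which is precisely the chain-rule computation you carry out, with the identification $y_j(0)+\sum_{i\neq j}A^{(ji)}X_i(t)=y_j(t)$ and the zero-sum (equivalently, pairwise antisymmetry) cancellation playing the same roles as in the two-agent case. Your explicit index bookkeeping yielding $\frac{dH}{dt}=(1-\sigma)\sum_{i\neq j}x_i^\intercal A^{(ij)}x_j$ is exactly the detail the paper leaves implicit.
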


\begin{theorem}\label{thm:Idenity}
	The dynamics (\ref{eqn:FTRL}) on a zero-sum (\ref{eqn:GeneralGame}) are Hamiltonian with invariant $H(X,y)$. 
\end{theorem}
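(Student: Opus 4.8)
The plan is to verify (\ref{eqn:HDyanamic}) directly, treating the entire collection of pairs $\{(X_i,y_i)\}_{i\in\mathcal{N}}$ as the conjugate position/\motion\ coordinates of the system. The conceptual difference from Theorems \ref{thm:TwoAgent} and Corollary \ref{cor:bipartite} is that there is no distinguished ``side'' to collapse the state onto a single pair $(X_1,y_1)$; instead I would retain all $n$ pairs and show that the coupled equations still close up into Hamiltonian form. As in the proof of Theorem \ref{thm:TwoAgent}, the entire argument reduces to computing two families of partial derivatives of $H(X,y)$, one with respect to each $y_i$ and one with respect to each $X_i$, and matching them against the \eqref{eqn:FTRL} vector field.

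First I would dispatch the ``kinetic'' family. Since $y_i$ enters $H$ only through the single summand $h_i^*(y_i)$, we immediately get $\partial H/\partial y_i=\nabla h_i^*(y_i)=x_i=\tfrac{d}{dt}X_i$ for every $i$, which is the first half of (\ref{eqn:HDyanamic}); notably this half uses nothing about the zero-sum structure. For the ``potential'' family, the variable $X_i$ appears inside every potential term indexed by $j\neq i$, through the linear argument $y_j(0)+\sum_{k\neq j}A^{(jk)}X_k$. The chain rule then yields $-\partial H/\partial X_i=\sigma\sum_{j\neq i}\big(A^{(ji)}\big)^\intercal\nabla h_j^*\big(y_j(0)+\sum_{k\neq j}A^{(jk)}X_k\big)$. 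Here I would invoke the trajectory identity $y_j(t)=y_j(0)+\sum_{k\neq j}A^{(jk)}X_k(t)$, obtained by integrating the \eqref{eqn:FTRL} update for agent $j$, to rewrite each argument as $y_j(t)$ and hence each gradient as $x_j(t)$. Finally, applying the zero-sum relation $A^{(ij)}=-\big(A^{(ji)}\big)^\intercal$ together with $\sigma=-1$, so that $\sigma\big(A^{(ji)}\big)^\intercal=A^{(ij)}$, collapses the expression to $-\partial H/\partial X_i=\sum_{j\neq i}A^{(ij)}x_j=\tfrac{d}{dt}y_i$, the second half of (\ref{eqn:HDyanamic}). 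Together with Theorem \ref{thm:energy} (equivalently, the invariance $dH/dt=0$ that Hamilton's equations force automatically), this establishes that the flow is Hamiltonian with invariant $H(X,y)$.

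The step I expect to be the main obstacle is precisely this potential family: with no single agent's viewpoint to reduce to, one must check that the many cross-terms $\big(A^{(ji)}\big)^\intercal x_j$ reassemble exactly into agent $i$'s own payoff update $\sum_{j\neq i}A^{(ij)}x_j$, and this reassembly succeeds only because of the antisymmetry $A^{(ij)}=-\big(A^{(ji)}\big)^\intercal$. I would also be careful to stress a subtlety absent from the two-agent case: there, a single conjugate pair sufficed and Hamilton's equations held as an ambient identity, whereas here $\dot y_i$ genuinely depends on the other $y_j$, so the matching of $-\partial H/\partial X_i$ with $\dot y_i$ is valid only along the \eqref{eqn:FTRL} orbit, on which the affine relation $y_j=y_j(0)+\sum_{k\neq j}A^{(jk)}X_k$ is preserved. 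Verifying (\ref{eqn:HDyanamic}) on this invariant relation is exactly what the substitution in the two-agent proof does, and carrying it through the fully coupled $n$-agent system is the only genuinely new ingredient.
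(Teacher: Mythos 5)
Your proposal is correct and takes essentially the same route as the paper, which simply states that the proof ``follows identically'' to Theorem \ref{thm:TwoAgent}: you compute $\partial H/\partial y_i$ and $-\partial H/\partial X_i$, substitute the trajectory identity $y_j(t)=y_j(0)+\sum_{k\neq j}A^{(jk)}X_k(t)$, and use $\sigma\bigl(A^{(ji)}\bigr)^\intercal=A^{(ij)}$ to recover the (\ref{eqn:FTRL}) vector field, exactly as in the two-agent computation. Your explicit handling of the cross-terms and the observation that the matching holds along the invariant affine relation is a faithful (and slightly more careful) writing-out of what the paper leaves implicit.
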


The proof of the theorems follow identically to Theorems \ref{thm:2energy} and \ref{thm:TwoAgent} respectively. 
Interestingly, the theorems also hold for coordination network games.  
However, the energy function $H(X,y)$ is meaningless in such settings. 
Recall that $y_i(t)=y_i(0)+\sum_{j\neq i} A^{(ij)}X_j(t)$. 
Thus, the energy function can be rewritten as
\begin{equation}
H(X,y)= (1-\sigma)\sum_{i\in {\cal N}} h^*_i\left( y_i(t)\right)=
\begin{cases}
2\sum_{i\in {\cal N}} h^*_i\left( y_i(t)\right) & (\sigma =-1)\\
0 &  (\sigma =1)\\
\end{cases}
\end{equation}
In the case of zero-sum network games ($\sigma=-1$), we are effectively double counting the energy in the system. 
However, in coordination network games ($\sigma=1$), the total energy is uniformly zero for every possible trajectory of (\ref{eqn:FTRL}) which provides no information about the actual dynamics.

\section{Using Tools from Physics to Gain Insight about Online Learning}\label{sec:Significance}

Typically, the study of games is centered around the Nash equilibrium. 
As such, it may be surprising that the Nash equilibrium never appears in our analysis of the behavior of learning dynamics in games. 
However, as we see in this section, our analysis has many implications regarding learning dynamics and the Nash equilibrium.
Recall, every solution to (\ref{eqn:GeneralGame}) is a Nash equilibrium $x^*_i\in {\cal X}_i$ and is such that 
\begin{equation}\label{eqn:NE}
x_i^*\sum_{i\neq j}A^{(ij)}x_j^* \geq x_i\sum_{i\neq j}A^{(ij)}x_j^* \ \forall x_i\in {\cal X}_i
\end{equation}
Moreover, if $x^*$ is in the relative interior of ${\cal X}$, then we refer to $x^*$ as fully-mixed and (\ref{eqn:NE}) holds with equality.

\subsection{Conservation of energy}\label{sec:EnergyIsDistnance}
It is well known that Hamiltonians preserve energy, providing an alternative proof of Theorem \ref{thm:2energy}. 
This conservation has significant implications for online learning in zero-sum games. 
In many settings, such as economics, the ``market'' is studied from the perspective of the Nash equilibrium.
However,  conservation of energy implies that agent strategies never get too close to a fully-mixed Nash equilibrium when agents are adaptive. 
Therefore adaptive agents never actually realize equilibrium behavior and may exhibit behaviors that are unforeseen to equilibrium analysis.

\begin{theorem}\label{thm:distance}
	Suppose $x^*$ is a fully-mixed Nash equilibrium of a zero-sum (\ref{eqn:GeneralGame}), 
	$x_i$ and $y_i$ are updated according to (\ref{eqn:FTRL}), and  $x(0)\neq x^*$.  There exists a $d>0$ such that the Bregman distance from $x^*$ to $x(t)$ is always at least $d$. Moreover, if the trajectory remains in the interior of the simplex for all times $t$ then the Bregman distance from $x^*$ to $x(t)$ is time-invariant.
\end{theorem}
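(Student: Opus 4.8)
The plan is to track the aggregate Bregman divergence between the fully-mixed equilibrium $x^*$ and the FTRL iterate, and to show it is essentially a relabeling of the conserved Hamiltonian. The key device is the Fenchel-duality identity that swaps primal and dual Bregman divergences: writing $y_i^*=\nabla h_i(x_i^*)$, one has $D_{h_i}(x_i^*,x_i(t))=D_{h_i^*}(y_i(t),y_i^*)$ whenever $x_i(t)$ lies in the interior, since $x_i(t)=\nabla h_i^*(y_i(t))$ and $x_i^*=\nabla h_i^*(y_i^*)$. I would work primarily with the dual quantity $\Phi(t):=\sum_{i\in{\cal N}} D_{h_i^*}(y_i(t),y_i^*)$, because each $h_i^*$ is differentiable everywhere, so $\Phi$ is globally well-defined even if $x(t)$ reaches the boundary.

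Next I would prove $\Phi$ is time-invariant. Expanding, $\Phi(t)=\sum_i h_i^*(y_i(t))-\sum_i\langle x_i^*,y_i(t)\rangle$ plus constants. The first sum is $\tfrac12 H(X,y)$ in the zero-sum case ($\sigma=-1$), hence invariant by Theorem \ref{thm:energy}; equivalently $\tfrac{d}{dt}\sum_i h_i^*(y_i)=\sum_i x_i(t)^\intercal\sum_{j\neq i}A^{(ij)}x_j(t)=0$ by the zero-sum property. For the linear term I would differentiate and use antisymmetry $A^{(ij)}=-(A^{(ji)})^\intercal$ to rewrite $\sum_i x_i^{*\intercal}\sum_{j\neq i}A^{(ij)}x_j(t)=-\sum_i x_i(t)^\intercal\sum_{j\neq i}A^{(ij)}x_j^*$; the fully-mixed Nash condition (\ref{eqn:NE}) forces each inner vector $\sum_{j\neq i}A^{(ij)}x_j^*$ to be constant across agent $i$'s pure strategies, say $c_i\mathbf{1}$, so the expression collapses to $-\sum_i c_i=-\sum_i x_i^{*\intercal}\sum_{j\neq i}A^{(ij)}x_j^*=0$, again by zero-sum. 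Thus $\dot\Phi\equiv0$ and $\Phi(t)\equiv\Phi(0)=:d$.

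From invariance I would extract both conclusions. Positivity of $d$: since $x(0)\neq x^*$, some coordinate has $\nabla h_i^*(y_i(0))\neq\nabla h_i^*(y_i^*)$, which forces $D_{h_i^*}(y_i(0),y_i^*)>0$ and hence $d>0$. The ``moreover'' (interior) case is then immediate: there $\Phi(t)$ coincides with the primal Bregman distance $\sum_i D_{h_i}(x_i^*,x_i(t))$, so the latter is exactly time-invariant. For the uniform lower bound in the general case, I would argue that $x(t)$ cannot approach $x^*$: on a small interior neighborhood $U$ of $x^*$ the primal distance is strictly below $d$, yet there it agrees with the conserved value $\Phi=d$, a contradiction; hence $x(t)$ stays in the compact set ${\cal X}\setminus U$, on which the Bregman distance to $x^*$ admits a positive minimum, giving the desired uniform lower bound.

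The main obstacle is the boundary bookkeeping. The conserved dual functional $\Phi$ equals the primal Bregman distance only on the interior; when $x(t)$ touches the boundary---possible for non-steep regularizers such as gradient descent---the two genuinely differ (the dual over-counts), so one cannot simply equate conserved energy with distance everywhere. The neighborhood-and-compactness argument above is what bridges this gap, together with the observation that steep regularizers (e.g.\ the entropy giving replicator) keep the trajectory interior for all $t$, so the exact-invariance statement applies globally in that case. By contrast, the invariance computation itself is a one-line differentiation resting on the antisymmetry of the payoffs and the payoff-equalizing property of a fully-mixed Nash equilibrium.
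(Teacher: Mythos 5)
Your proposal is correct and takes essentially the same route as the paper: your conserved quantity $\Phi(t)=\sum_{i}D_{h_i^*}(y_i(t),y_i^*)$ is term-by-term identical to the Fenchel coupling $F(x^*,y(t))$ that the paper conserves, your invariance computation rests on the same two ingredients (zero-sum antisymmetry plus the payoff-equalizing property of a fully-mixed equilibrium), and your passage to the primal Bregman bound via the interior equality $F=D$ and a compactness argument mirrors the paper's. The only cosmetic differences are that you differentiate where the paper integrates, and you exclude an interior neighborhood of $x^*$ where the paper instead bounds $D$ by $\min\{d,D(x^*||x(0))\}$ using the boundary minimum; neither changes the substance.
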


\begin{proof} 
      We start by iterating on a slight variation of an argument made in \cite{GeorgiosSODA18} that the Fenchel-coupling (specifically in the sense of \cite{MS16}) between $x^*$ and $y(t)$ is time-invariant.
	The Fenchel-coupling is defined as
\begin{align}
	F(x^*,y(t))&\equiv\sum_{i\in{\cal N}} \left(h^*_i(y_i(t)) -\langle y_i(t),x_i^*\rangle+h_i(x_i^*)\right)
	\end{align}
	First, observe that $\sum_{i\in {\cal N}} \langle y_i(t),x_i^*\rangle$ is time-invariant:
	\begin{align}
	\sum_{i\in {\cal N}} \langle y_i(t),x_i^*\rangle&= \sum_{i\in {\cal N}}\langle y_i(0), x_i^*\rangle +\sum_{i\in {\cal N}} \int_0^t \left(x_i^*\right)^\intercal \sum_{j\neq i} A^{(ij)}x_j(s)ds\\
	&= \sum_{i\in {\cal N}}\langle y_i(0), x_i^*\rangle + \int_0^t \sum_{i\in {\cal N}}\sum_{j\neq i}\left(x_i^*\right)^\intercal  A^{(ij)}x_j(s)ds\\
	&= \sum_{i\in {\cal N}}\langle y_i(0), x_i^*\rangle + \int_0^t \sum_{j\in {\cal N}}\sum_{i\neq j}\left(x_i^*\right)^\intercal  A^{(ij)}x_j(s)ds\\
	&= \sum_{i\in {\cal N}}\langle y_i(0), x_i^*\rangle + \int_0^t \sum_{j\in {\cal N}}\sum_{i\neq j}\left(x_i^*\right)^\intercal  A^{(ij)}x_j^*\label{eqn:fullymixed}\\
	&= \sum_{i\in {\cal N}}\langle y_i(0), x_i^*\label{eqn:zerosum}\rangle
	\end{align}
	where (\ref{eqn:fullymixed}) follows since $x^*$ is fully-mixed and (\ref{eqn:zerosum}) follows since the game is zero-sum.
	By definition of $X_i(t)$ and $y_i(t)$, $H(X,y)=2  \sum_{i\in {\cal N}}h_i^*(y_i(t))$. 
	Therefore, the Fenchel-coupling is
	\begin{align}
		F(x^*,y(t))&=\sum_{i\in{\cal N}} \left(h^*_i(y_i(t)) -\langle y_i(t),x_i^*\rangle+h_i(x_i^*)\right)\\
	&=  \sum_{i\in{\cal N}} \left(h^*_i(y_i(t)) -\langle y_i(0),x_i^*\rangle+h_i(x_i^*)\right)\\
	&= \frac{1}{2}\cdot H(X,y)+\sum_{i\in{\cal N}}\left( h_i(x_i^*)-\langle y_i(0),x_i^*\rangle\right).
\end{align} 
Thus, the Fenchel-coupling and half of the energy in the system differ by a constant implying the Fenchel-coupling is time-invariant.
The Bregman distance from $x^*$ to $x(t)$ is defined as 
\begin{align}
	D(x^*||x)\equiv\sum_{i\in{\cal N}} \left( h_i(x_i^*)-h_i(x_i(t)) -\langle \nabla h_i(x_i(t)), x_i^*-x_i(t)\rangle \right).
\end{align}
The distance $D$ is uniquely minimized at $x=x^*$ with $D(x^*||x^*)=0$.
Moreover, since the boundary of ${\cal X}$ is compact and $D$ is continuous, there exists a $d>0$ such that $D(x^*||x)\geq d$ for all $x$ on the boundary. 
By \cite{MS16}, $F(x^*,y(t))\geq D(x^*||x(t))$ where equality holds whenever $x(t)$ is fully mixed.
Therefore, when $x(t)$ is fully mixed, $D(x^*||x(t))=F(x^*,y(t))=F(x^*,y(0))\geq D(x^*||x(0))>0$  since $F$ is time-invariant and $x(0)\neq x^*$.  
Thus, $D(x^*||x(t))\geq \min\{d, D(x^*||x(0))\}>0$.  
If 
  the trajectory remains in the interior of the simplex for all $t$
 then $D(x^*||x(t))=F(x^*,y(t))$ is constant for all $t$, completing the proof.
\end{proof}

\begin{corollary}[Piliouras and Shamma \cite{piliouras2014optimization}]\label{cor:KL}
	Suppose $x^*$ is a fully-mixed Nash equilibrium of a  zero-sum (\ref{eqn:GeneralGame}). Then the KL-divergence between $x^*$ and $x(t)$ is time-invariant when $x$ and $y$ are updated with the replicator dynamics.
\end{corollary}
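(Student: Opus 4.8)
The plan is to realize this corollary as the interior-trajectory special case of Theorem \ref{thm:distance} applied to one specific regularizer. First I would recall, as noted in the preliminaries, that the replicator dynamics are precisely (\ref{eqn:FTRL}) with the negative-entropy regularizer $h_i(x_i)=\sum_{s_i\in {\cal S}_i} x_{is_i}\log x_{is_i}$. Consequently everything established for general (\ref{eqn:FTRL}) on a zero-sum (\ref{eqn:GeneralGame}) applies verbatim, and the only work is to (a) identify the Bregman distance $D(x^*||x)$ for this regularizer with the KL-divergence, and (b) verify that the replicator trajectory never leaves the relative interior of the simplex, so that the stronger time-invariant conclusion of Theorem \ref{thm:distance} is available rather than merely the positive lower bound.

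For step (a) I would compute $\nabla h_i(x_i)_{s_i}=\log x_{is_i}+1$ and substitute into the definition of $D(x^*||x)$. Since $x_i^*$ and $x_i(t)$ are both probability vectors, the affine contribution $\sum_{s_i}(x^*_{is_i}-x_{is_i})$ coming from the constant $+1$ cancels, and the surviving terms collapse to $\sum_{i\in{\cal N}}\sum_{s_i\in{\cal S}_i} x^*_{is_i}\log\frac{x^*_{is_i}}{x_{is_i}(t)}$, which is exactly the KL-divergence from $x^*$ to $x(t)$. This is a short, routine calculation that uses only the simplex constraint.

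For step (b) I would use that the strategy map induced by the entropic regularizer is the Gibbs/softmax map $x_i(t)=\nabla h_i^*(y_i(t))$ with $x_{is_i}(t)\propto \exp(y_{is_i}(t))$. Because $y_i(t)$ is finite for every finite $t$ (it is the initial vector plus a bounded integral of payoffs), every coordinate of $x_i(t)$ is strictly positive, so $x(t)$ stays in the relative interior of ${\cal X}$ for all $t$. Hence the hypothesis of the second sentence of Theorem \ref{thm:distance} is met, and that theorem yields that $D(x^*||x(t))$ is constant in $t$; combined with step (a), this is exactly the claimed time-invariance of the KL-divergence.

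The main obstacle, such as it is, lies in the bookkeeping of step (a): one must carry the per-agent and per-action sums correctly and exploit the simplex constraint to annihilate the affine term arising from $\nabla h_i$. Step (b) is essentially immediate from the explicit softmax form, and once both ingredients are in place the corollary is a direct invocation of Theorem \ref{thm:distance} with $\sigma=-1$.
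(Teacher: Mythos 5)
Your proposal is correct and takes essentially the same approach as the paper's own proof: both reduce the corollary to Theorem \ref{thm:distance} by observing that replicator dynamics are (\ref{eqn:FTRL}) with the entropy regularizer, that the associated Bregman distance is exactly the KL-divergence, and that replicator trajectories remain in the relative interior of the simplex so the time-invariance clause applies. The only difference is that you carry out explicitly the two verifications (cancellation of the affine term in the Bregman computation, and positivity via the softmax form of $\nabla h_i^*$) that the paper asserts without detail.
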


\begin{proof}
Replicator dynamics and the KL-divergence are special cases of (\ref{eqn:FTRL}) and the Bregman distance respectively where $h(x_i)=x_i \log{x_i}$. 
Moreover, replicator dynamics guarantee that $x_i(t)$ is interior for all $t$  
and the result follows from Theorem \ref{thm:distance}.  
\end{proof}


\subsection{Recurrence in adversarial learning} 
 Poincar\'{e}  proved that in certain systems almost all trajectories return arbitrarily close to their initial position infinitely often, i.e., exhibit Poincar\'{e} recurrence  \cite{Poincare1890}.
Recurrent behavior is important in the applications of online learning because it implies that the behavior (approximately) repeats itself, becoming more predictable overtime.

Poincar\'{e} recurrence is typically shown via two properties: volume preservation and bounded orbits. 
\cite{piliouras2014optimization, GeorgiosSODA18} argue both of these properties directly to establish recurrence in the setting of network zero-sum games.
However, Hamiltonians provide the intuition for both of these arguments.
Volume preservation is a property held by all Hamiltonian systems. 
Moreover, many natural Hamiltonian systems have bounded orbits and therefore are Poincar\'{e} recurrent.  
In the case of zero-sum games, as we discussed in the previous section, the energy in the system corresponds to the total distance from the Nash equilibrium and, unsurprisingly, (\ref{eqn:FTRL}) on network zero-sum games with interior Nash has bounded orbits\footnote{In general, $y_i(t)$ drifts according to $\sum_{j\neq i}A^{(ij)}x^*_j$ and therefore often is not bounded.  However, adding a constant to the payoff matrix does not change the dynamics of $x(t)$ and therefore we may always assume $\sum_{j\neq i}A^{(ij)}x^*_j=0$ and $y_i(t)$ is bounded.} and therefore is Poincar\'{e} recurrent.

\subsection{Divergence from equilibrium in discrete-time}
The results in continuous-time make it clear that the Nash equilibrium is not always an appropriate tool to analyze a system where agents adaptively update their strategies. 
The study of Hamiltonians also shows thats the Nash equilibrium can be a poor solution concept in the discrete-time.
Specifically, in (\ref{eqn:FTRL}), agent strategies tend to be repelled from the Nash equilibrium in zero-sum games.

\begin{theorem}\label{thm:increaseenergy}
	Suppose a continuous dynamic $y(t)$ has an invariant energy $H(y)$.
	If $H$ is continuous with convex sublevel sets then the energy in the corresponding discrete-time dynamic obtained via Euler's method/integration is non-decreasing. 
\end{theorem}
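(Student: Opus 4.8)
The plan is to argue geometrically: the continuous trajectory is confined to a level set of $H$, which is the boundary of a convex sublevel set, and an Euler step moves along the tangent direction; since moving along the tangent of a convex body's boundary can only leave the body, the energy cannot decrease. Fix a point $y_n$ on the trajectory, write $c=H(y_n)$, and let $S_c=\{y:H(y)\leq c\}$, which is convex by hypothesis. Because $H$ is invariant, the whole trajectory through $y_n$ lies in the level set $\{y:H(y)=c\}\subseteq S_c$. Since $H$ is continuous, $\{y:H(y)<c\}$ is open and contained in $S_c$, hence contained in $\operatorname{int} S_c$; in particular $y_n\in\partial S_c$ (the degenerate case $\operatorname{int} S_c=\emptyset$ forces $H\geq c$ everywhere, making the conclusion immediate). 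I would therefore invoke the supporting hyperplane theorem to obtain a nonzero $v$ with $v^\intercal(y-y_n)\leq 0$ for all $y\in S_c$.

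The crucial intermediate step is to show that the Euler velocity at $y_n$ lies in this supporting hyperplane, i.e. $v^\intercal\dot{y}_n=0$. This is where I avoid differentiating $H$: writing $y(t_0)=y_n$ and using that the trajectory stays inside $S_c$ for times slightly before and slightly after $t_0$, I get $v^\intercal(y(t_0+s)-y_n)\leq 0$ for both signs of $s$. Dividing by $s$ and letting $s\to 0^+$ from each side yields $v^\intercal\dot{y}_n\leq 0$ and $v^\intercal\dot{y}_n\geq 0$, hence $v^\intercal\dot{y}_n=0$. This two-sided argument is the heart of the proof and is exactly what replaces the usual identity $\nabla H\cdot\dot{y}=0$ when $H$ is merely continuous rather than differentiable.

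With tangency in hand, the Euler update $y_{n+1}=y_n+\eta\dot{y}_n$ (step size $\eta>0$) satisfies $v^\intercal(y_{n+1}-y_n)=\eta\,v^\intercal\dot{y}_n=0$, so $y_{n+1}$ lies on the supporting hyperplane itself. Since the interior of a convex set lies strictly on one side of any of its supporting hyperplanes, $\operatorname{int} S_c\subseteq\{y:v^\intercal(y-y_n)<0\}$, and therefore $y_{n+1}\notin\operatorname{int} S_c$. Combining with $\{y:H(y)<c\}\subseteq\operatorname{int} S_c$ from the first paragraph gives $H(y_{n+1})\geq c=H(y_n)$, which is the claimed non-decrease of energy.

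The step I expect to be the main obstacle is establishing $v^\intercal\dot{y}_n=0$ cleanly under only continuity of $H$: one must ensure that $\dot{y}_n$ exists (it does, as the velocity field of the underlying continuous dynamic) and that the one-sided difference quotients genuinely converge to $\dot{y}_n$ from both time directions. A secondary technical point is absorbing the degenerate configurations — empty interior of $S_c$, or $y_n$ failing to be a boundary point because $H$ is locally constant — each of which I would dispatch quickly since it renders the conclusion either trivial or vacuous.
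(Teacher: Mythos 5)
Your proposal is correct and takes essentially the same route as the paper's proof: a supporting hyperplane of the convex sublevel set at the current point, a difference-quotient argument showing the Euler direction cannot point strictly inside it, and continuity of $H$ to conclude the step cannot land in $\{y:H(y)<c\}$. The differences are cosmetic: you use both time directions to get exact tangency $v^\intercal \dot{y}_n=0$ where the paper only derives (and only needs) the one-sided bound $a^\intercal \dot{y}(t)\geq 0$ from times $t-s$, and you quote the standard fact that the interior of a convex set lies strictly on one side of any supporting hyperplane where the paper re-derives it inline via the $\epsilon a$ perturbation contradiction. One caveat: your parenthetical claim that the case $y_n\in\operatorname{int} S_c$ is ``trivial or vacuous'' is not actually right for merely quasiconvex $H$ --- a sublevel set can be flat at a non-minimal level (e.g.\ $H(y)=g(y_1)$ with $g$ unimodal and constant on an interval away from its minimum), and there a trajectory point interior to $S_c$ with $\{H<c\}\neq\emptyset$ admits Euler steps that strictly decrease $H$ for large enough $\eta$ --- but this is a soft spot you share with the paper, whose ``without loss of generality assume $\{y:H(y)\leq c\}$ is full-dimensional'' likewise fails to guarantee that a supporting hyperplane through $y(t)$ exists; in the intended application (Corollary \ref{cor:diverge}, where $H$ is convex) the degenerate case genuinely is trivial, since a convex function locally constant at level $c$ has $c$ as its global minimum, so $\{H<c\}=\emptyset$.
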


\begin{proof}
         Let $t$ denote the current time instant.  
		Euler's method  with step-size $\eta$ yields an approximation of $y(t+\eta)$ with
	\begin{align}
	\hat{y}^{t+\eta}= y(t)+\eta \frac{d}{dt}y(t)
	\end{align}

	We will show $H(\hat{y}^{t+\eta})\geq H(y(t))$.
 	Suppose $H(y(t))=c$ and without loss of generality, assume $\{y: H(y)\leq c\}$ is full-dimensional. 
	Since $\{y: H(y)\leq c\}$ is convex, there exists a supporting hyperplane $\{y: a^\intercal y = a^\intercal y(t)\}$ such that $a^\intercal y \leq a^\intercal y(t)$ for all $y\in \{y: H(y)\leq c\}$.
	Therefore, 	
	\begin{align}
		a^\intercal \left(\frac{d}{dt}y(t)\right)&= a^\intercal \left(\lim_{s\to 0^+} \frac{y(t)-y(t-s)}{s}\right)\\
		&= \left(\lim_{s\to 0^+} \frac{a^\intercal y(t)-a^\intercal y(t-s)}{s}\right)\\
		&\geq \left(\lim_{s\to 0^+} \frac{a^\intercal y(t)-a^\intercal y(t)}{s}\right)=0,
	\end{align}
	implying
	\begin{align}
		a^\intercal \hat{y}^{t+\eta}&=a^\intercal y(t)+a^\intercal\left(\eta \frac{d}{dt}y(t)\right)\\	
									&\geq a^\intercal y(t).
	\end{align}
	
	For contradiction, suppose $H(\hat{y}^{t+\eta})< c$. By continuity of $H$, 
	 for sufficiently small $\epsilon>0$, $\hat{y}^{t+\eta}+\epsilon a \in \{y: H(y)\leq c\}$.  
	However, 
	\begin{align}
		a^\intercal (\hat{y}^{t+\eta}+\epsilon a) \geq a^\intercal y(t) +\epsilon||a||_2^2 > a^\intercal y(t)
	\end{align}
	contradicting that $\{y: a^\intercal y = a^\intercal y(t)\}$ is a supporting hyperplane. 
	Thus, the statement of the theorem holds.
\end{proof}

\begin{corollary}[Bailey and Piliouras \cite{BaileyEC18}]\label{cor:diverge}
	If $x^*$ is fully-mixed Nash equilibrium of a zero-sum (\ref{eqn:GeneralGame}), then the Fenchel-coupling between $y$ and $x^*$ is non-decreasing in the discrete-time version of (\ref{eqn:FTRL}). 
\end{corollary}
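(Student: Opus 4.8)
The plan is to obtain the corollary as a direct instance of Theorem \ref{thm:increaseenergy}, combined with the identity --- established inside the proof of Theorem \ref{thm:distance} --- that the Fenchel-coupling and the system energy differ only by an additive constant. First I would observe that the discrete-time version of (\ref{eqn:FTRL}) is exactly Euler's method applied to the continuous flow: the update $\hat{y}_i^{t+\eta}=y_i(t)+\eta\sum_{j\neq i}A^{(ij)}x_j(t)$ is precisely $y_i(t)+\eta\frac{d}{dt}y_i(t)$ once we substitute $x_j(t)=\nabla h_j^*(y_j(t))$. Hence Theorem \ref{thm:increaseenergy} applies as soon as the governing energy is continuous with convex sublevel sets.

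Next I would identify that energy. By Theorem \ref{thm:energy} the Hamiltonian of a zero-sum (\ref{eqn:GeneralGame}) is invariant, and as recorded after Theorem \ref{thm:Idenity} it collapses (for $\sigma=-1$) to $H(y)=2\sum_{i\in{\cal N}}h_i^*(y_i)$, a function of $y$ alone. Each $h_i^*$ is a convex conjugate, hence convex, so $H$ is convex and every sublevel set $\{y:H(y)\le c\}$ is convex; each $h_i^*$ is moreover continuously differentiable (dual to strong convexity of $h_i$), so $H$ is continuous. The hypotheses of Theorem \ref{thm:increaseenergy} are therefore satisfied, and I conclude that $H$ is non-decreasing along the discrete/Euler trajectory.

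Finally I would transfer this monotonicity to the Fenchel-coupling. The computation in the proof of Theorem \ref{thm:distance} gives $F(x^*,y(t))=\tfrac12 H(y(t))+\sum_{i\in{\cal N}}\big(h_i(x_i^*)-\langle y_i(0),x_i^*\rangle\big)$, so it remains only to confirm that the additive term is genuinely constant along the discrete trajectory, equivalently that $\sum_{i\in{\cal N}}\langle y_i(t),x_i^*\rangle$ is fixed. This is the single point at which the continuous-time reasoning of (\ref{eqn:fullymixed})--(\ref{eqn:zerosum}) must be re-verified in discrete time, and it is the step I expect to demand the most care. The per-step increment is $\sum_{i}\sum_{j\neq i}(x_i^*)^\intercal A^{(ij)}x_j(t)$; grouping by $j$ and using $A^{(ij)}=-(A^{(ji)})^\intercal$ together with the fully-mixed Nash condition --- which forces $\sum_{i\neq j}A^{(ji)}x_i^*$ to be a multiple of the all-ones vector $\mathbf{1}$ --- shows each inner sum depends on $x_j(t)$ only through $\mathbf{1}^\intercal x_j(t)=1$, hence equals its value at $x^*$; the zero-sum property then makes the total increment vanish. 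Because this uses only simplex membership of $x_j(t)$ and not continuity of the flow, it carries over verbatim to the discrete dynamics. With the additive term thus held constant, $F(x^*,y(t))$ inherits the non-decreasing behavior of $\tfrac12 H(y(t))$, which proves the corollary.
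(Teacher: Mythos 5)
Your proof is correct and takes essentially the same route as the paper's: identify the discrete dynamics as Euler's method, apply Theorem \ref{thm:increaseenergy} to the convex (hence continuous, with convex sublevel sets) zero-sum energy $H(y)=2\sum_{i\in{\cal N}}h_i^*(y_i)$, and transfer the monotonicity to the Fenchel-coupling via the identity from the proof of Theorem \ref{thm:distance}. Your additional step---re-verifying that $\sum_{i\in{\cal N}}\langle y_i,x_i^*\rangle$ stays constant across Euler steps, using only simplex membership of $x_j(t)$, the fully-mixed Nash condition, and the zero-sum property---addresses a point the paper leaves implicit by citing the continuous-time computation, and your verification of it is correct.
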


\begin{proof}
	The discrete-time version of (\ref{eqn:FTRL}) is simply Euler's method applied to the continuous time dynamics. 
	The convex-conjugate $h^*_i$ is convex by definition and therefore the energy function $H(y)= \sum_{i\in {\cal N}} h^*_i(y_i(t))$ is also convex and therefore continuous.  
	Hence, for $y,y'\in \{y: H(y)\leq c\}$, 
	\begin{align}H(\alpha y + (1-\alpha)y') \leq \alpha H(y)+(1-\alpha) H(y')\leq c.
	\end{align}
	Therefore $\alpha y + (1-\alpha)y'\in \{y: H(y)\leq c\}$ and $H$ has convex sublevel sets.
	By Theorem \ref{thm:Idenity}, the dynamic $y(t)$ has invariant energy $H(y)$.
	Therefore by Theorem \ref{thm:increaseenergy}, the energy in the discrete-time dynamics is non-decreasing.
	In the proof of Theorem \ref{thm:distance}, it was shown the Fenchel-coupling differs from the energy in the system by a constant when $x^*$ is fully-mixed.  
	Therefore, the Fenchel-coupling is non-decreasing.
\end{proof}

The main result of \cite{BaileyEC18} is that the strategies come arbitrarily close to the boundary infinitely often. 
This can also be shown using Hamiltonians;
Theorem \ref{thm:increaseenergy} becomes strictly increasing with strict convexity. 
Using continuity and compactness arguments, Corollary \ref{cor:diverge} can be strengthened to show that the Fenchel-coupling and Bregman distance strictly increases by a positive constant whenever $x(t)$ is in a compact region in the relative interior of ${\cal X}_i$. Since the Bregman distance is continuous, every compact region in the relative interior of ${\cal X}_i$ has a finite maximum Bregman distance and therefore $x(t)$ must leave the region infinitely often.  
This holds for every compact region in the relative interior of ${\cal X}_i$ and therefore $x(t)$ must come arbitrarily close to the boundary infinitely often.

\subsection{Intuitive, accessible explanations of learning dynamics}
Online learning is used in practice 
as a black box for adaptive, low regret decision making. 
 We have shown that Hamiltonians systems provide tools to significantly advance our understanding of such learning dynamics. 
Just as importantly, Hamiltonians provide the tools  to explain the results of learning dynamics to someone with no understanding of differential equations or analysis and only a basic understanding of physics.

Our analysis draws analogues between agent strategies, the NE, and the regularizer in (\ref{eqn:FTRL}) and the Earth, sun, and gravity respectively. 
In our solar system, the Earth's momentum, propels the Earth in a direction normal to the sun, and gravity pulls the Earth inwards towards the resulting in a stable periodic orbit. 
This equivalence perfectly explains the non-convergent, recurrent behavior of (\ref{eqn:FTRL}) in a (\ref{eqn:GeneralGame}). 
Further, a basic understanding of calculus provides intuition for discrete-time dynamics. 
The discrete-time dynamics are obtained simply by ``moving tangentially to the curve''. 
Since continuous time strategies orbit around the Nash equilibrium, discrete-time dynamics must diverge as in Corollary \ref{cor:diverge}.

\section{Generalizations of Games}

In this section, we extend the theory to a more general class of games. 
As we discussed in Section \ref{sec:bipartite}, our results hold for any convex, compact strategy space, ${\cal X}_i$.  
We further extend the results for the more general payoff function
\vspace{-.03in}\begin{equation}
\max_{x_i\in {\cal X}_i}\left\{\sum_{j\neq i}\left(x_i^\intercal A^{(ij)}x_j + b^{(ij)}\cdot x_i + d^{(ij)}\cdot x_j+c^{(ij)} \right)\right\} \ \forall i \label{eqn:general}\tag{Generalized Network Game}
\end{equation}
where $A^{(ij)}=\sigma \left( A^{(ji)}\right)^\intercal$. Again $\sigma=1$ corresponds to a coordination (\ref{eqn:general}) and $\sigma=-1$ corresponds to a zero-sum (\ref{eqn:general}).

With this new utility function, the dynamics of (\ref{eqn:FTRL}) must be redefined.  
Observe that agent $i$ only has control of $x_i^\intercal A^{(ij)}x_j + b^{(ij)}\cdot x_i$ and therefore only learns according to this component of their utility function. 
Thus, (\ref{eqn:FTRL}) is rewritten as
\begin{equation}\tag{FTRL2}\label{eqn:FTRL2}
\begin{aligned}
y_i(t)&= y_i(0)+\int_0^t \sum_{j\neq i}\left(A^{(ij)}x_j(s) + b^{(ij)}\right)ds\\
x_i(t)&= \argmax_{x_i\in {\cal X}_i} \{\langle x_i, y_i(t)\rangle-h_i(x_i)\}.
\end{aligned}
\end{equation}

Similar to $(\ref{eqn:FTRL})$, agent $i$ is still learning against agent $j$ via the payoff matrix $A^{(ij)}$.  
However, unlike a (\ref{eqn:GeneralGame}), a (\ref{eqn:general}) is not a closed system;
additional energy is introduced into the dynamical system via $b^{(ij)}$. 
Nonetheless, (\ref{eqn:FTRL2}) also admits Hamiltonian dynamics after describing motion with $y_i^t$ and position with $X_i(t)$.  
However, we modify the energy function to remove the additional energy introduced by $b^{(ij)}$:
\begin{equation}
\begin{aligned}
H(X,y)=& \sum_{i\in {\cal N}} h^*_i\left( y_i(t)\right) -\sigma \sum_{j\in {\cal N}}h^*_j\left(y_j(0)+\sum_{i\neq j}\left( A^{(ji)}X_i(t) +b^{(ji)}\right)\right)\\
&- \sum_{i\in {\cal N}}\sum_{j\neq j} b^{(ij)}\cdot X_i(t).
\end{aligned}
\end{equation}

\begin{theorem}\label{thm:GeneralGame}
	Given any set of convex, compact ${\cal X}_i$, the dynamics (\ref{eqn:FTRL2}) on a zero-sum (\ref{eqn:general}) are Hamiltonian with invariant $H(X,y)$. 
\end{theorem}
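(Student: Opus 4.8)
The plan is to establish the Hamiltonian property by verifying Hamilton's equations (\ref{eqn:HDyanamic}) directly, exactly along the lines of the proofs of Theorems \ref{thm:TwoAgent} and \ref{thm:Idenity}; the claimed time-invariance of $H$ is then either inherited from the Hamiltonian structure or checked in parallel as in Theorem \ref{thm:2energy}. As in Corollary \ref{cor:bipartite}, the only property of the strategy spaces that the computation uses is the Fenchel identity $x_i(t)=\nabla h_i^*(y_i(t))$, which holds for every convex, compact ${\cal X}_i$; hence the generality of the ${\cal X}_i$ costs nothing and the whole argument reduces to two partial-derivative computations together with the book-keeping of the drift terms $b^{(ij)}$.

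First I would differentiate $H$ in the motion coordinate $y_i$. Only the leading sum $\sum_{i}h_i^*(y_i)$ depends on $y_i$: the potential sum is written as a function of the positions $X$ and the fixed initial data $y_j(0)$, and the linear correction term depends on $X$ alone. Thus $\partial H/\partial y_i=\nabla h_i^*(y_i(t))=x_i(t)=\tfrac{d}{dt}X_i(t)$, which is the first of Hamilton's equations and confirms that $X_i$ is the position conjugate to the motion $y_i$, with the correction term harmlessly invisible to this derivative.

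Next I would differentiate $-H$ in the position coordinate $X_i$ and match the result against the (\ref{eqn:FTRL2}) drift $\tfrac{d}{dt}y_i(t)=\sum_{j\neq i}\bigl(A^{(ij)}x_j(t)+b^{(ij)}\bigr)$. The potential sum contributes, for each $j\neq i$, a chain-rule factor $\sigma(A^{(ji)})^\intercal\nabla h_j^*(\cdot)$; here the two moves that make the proof work are (i) recognizing that the argument of $h_j^*$ equals the cumulative payoff $y_j(t)$, so that $\nabla h_j^*(\cdot)=x_j(t)$, and (ii) invoking the symmetry $A^{(ij)}=\sigma(A^{(ji)})^\intercal$ to rewrite $\sigma(A^{(ji)})^\intercal x_j$ as $A^{(ij)}x_j$. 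The linear correction term contributes exactly $\sum_{j\neq i}b^{(ij)}$, so that the two pieces assemble into $\sum_{j\neq i}(A^{(ij)}x_j+b^{(ij)})=\tfrac{d}{dt}y_i(t)$, giving the second of Hamilton's equations.

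The hard part is precisely the book-keeping forced by the drift $b^{(ij)}$, which is the only genuine difference from the closed-system Theorem \ref{thm:Idenity}. Because a (\ref{eqn:general}) is an open system, the cumulative payoff acquires a constant-rate drift $\sum_{i\neq j}b^{(ji)}$, and the argument must confirm both that the potential term of $H$ is evaluated at precisely $y_j(t)$ (so step (i) above is legitimate) and that the linear term $-\sum_{i}\sum_{j\neq i}b^{(ij)}\cdot X_i$ is calibrated to reintroduce exactly the drift $\sum_{j\neq i}b^{(ij)}$ into $\dot y_i$ while removing the corresponding energy that the drift injects. Getting this calibration right — so that the quadratic cross terms still cancel by the antisymmetry of $A$ as in Theorem \ref{thm:2energy} and only the intended $b$-terms survive — is the crux; once it is verified, the remaining manipulations are the same routine dualities used in the closed-system theorems.
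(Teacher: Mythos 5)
Your first two steps (the two partial-derivative computations) coincide with the paper's intended argument---the paper's entire proof of Theorem \ref{thm:GeneralGame} is the remark that it ``follows identically to Theorem \ref{thm:Idenity}''---and they do go through \emph{provided} the argument of $h_j^*$ in the potential term is read as $y_j(t)$. But that proviso is exactly where your proof, and the calibration you yourself single out as the crux, breaks down. Under (\ref{eqn:FTRL2}) we have $y_j(t)=y_j(0)+\sum_{i\neq j}\left(A^{(ji)}X_i(t)+b^{(ji)}t\right)$, so writing the potential term as a function of the phase point forces an explicit factor of $t$ on the drifts: $H$ becomes a \emph{time-dependent} Hamiltonian. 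Consequently invariance is not ``inherited from the Hamiltonian structure'': along trajectories of a non-autonomous Hamiltonian, $\frac{dH}{dt}=\frac{\partial H}{\partial t}$, which here equals $-\sigma\sum_{j}\sum_{i\neq j}x_j\cdot b^{(ji)}$ and does not vanish in general.

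Checking invariance directly, as in Theorem \ref{thm:2energy}, confirms the failure and shows that no single linear correction can do the two jobs you assign to it. For $\sigma=-1$ the quadratic terms cancel by antisymmetry, but \emph{each} of the two $h^*$ sums picks up the drift, so
\begin{align}
\frac{dH}{dt}&=2\sum_{i\in{\cal N}}\sum_{j\neq i}x_i\cdot b^{(ij)}-\sum_{i\in{\cal N}}\sum_{j\neq i}b^{(ij)}\cdot x_i=\sum_{i\in{\cal N}}\sum_{j\neq i}x_i\cdot b^{(ij)},
\end{align}
which is not identically zero: for instance with $A^{(12)}=A^{(21)}=0$ (a legitimate zero-sum instance), $b^{(12)}=(1,0)^\intercal$, $b^{(21)}=0$, and entropic regularizers, one gets $\frac{dH}{dt}=x_{11}(t)>0$ for all $t$. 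Invariance would require the correction $-\sum_i\sum_{j\neq i}b^{(ij)}\cdot X_i$ to carry coefficient $2$, but then $-\partial H/\partial X_i=\sum_{j\neq i}A^{(ij)}x_j+2\sum_{j\neq i}b^{(ij)}\neq\dot y_i$ and the second Hamilton equation fails; with coefficient $1$, Hamilton's equations hold (in the time-dependent sense) but $H$ is not conserved. So ``reintroducing the drift into $\dot y_i$'' and ``removing the energy the drift injects'' are incompatible demands on the same term, and the argument cannot be completed as you propose. To be fair, this defect is inherited from, and hidden by, the paper's one-line proof: the quantity that actually is conserved in the zero-sum case is $\sum_i h_i^*(y_i)-\sum_i\sum_{j\neq i}b^{(ij)}\cdot X_i$, but that quantity does not generate the dynamics, so the theorem's two claims (Hamiltonian structure and invariance of the stated $H$) cannot both hold as written.
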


The proof follows identically to Theorem \ref{thm:Idenity}.

Similar to Theorem \ref{thm:TwoAgent} and Corollary \ref{cor:bipartite}, we can define the energy from one side of the network for coordination games in two-agent and bipartite (\ref{eqn:general}).
\begin{equation}
\begin{aligned}
	\bar{H}(X,y)=& \sum_{i\in {\cal N}_1} h^*_i\left( y_i(t)\right) -\sigma \sum_{j\in {\cal N}_2}h^*_j\left(y_j(0)+\sum_{i\in {\cal N}_i}\left( A^{(ji)}X_i(t)+b^{(ji)}\right) \right)\\
	&- \sum_{i\in {\cal N}_1}\sum_{j\in {\cal N}_2} b^{(ij)}\cdot X_i(t).
\end{aligned}
\end{equation}
\begin{theorem}\label{thm:GeneralBipartite}
	Given any set of convex, compact ${\cal X}_i$, the dynamics (\ref{eqn:FTRL2}) on a coordination or zero-sum, two-agent or bipartite (\ref{eqn:general}) are Hamiltonian with invariant $\bar{H}(X,y)$. 
\end{theorem}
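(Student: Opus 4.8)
The plan is to follow the same two-step template that produced Corollary~\ref{cor:bipartite}: first collapse the bipartite game to a two-agent game, and then verify Hamilton's equations (\ref{eqn:HDyanamic}) for $\bar H$ by direct differentiation. For the reduction I would reuse the meta-agent argument verbatim: since there are no intra-partition edges, the agents of ${\cal N}_1$ (resp.\ ${\cal N}_2$) bundle into a single meta-agent with the product strategy space $\bigtimes_{i\in{\cal N}_1}{\cal X}_i$, and because (\ref{eqn:FTRL2}) is separable across agents it induces identical dynamics on the two formulations. The only structural change relative to Theorem~\ref{thm:GeneralGame} is that positions and motions are tracked solely from the ${\cal N}_1$ side, through $X_i(t)$ and $y_i(t)$ for $i\in{\cal N}_1$; this one-sidedness is exactly what keeps $\bar H$ informative for coordination games, unlike the full-network energy of Theorem~\ref{thm:Idenity}, which collapses to $0$ when $\sigma=1$.

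For the first of Hamilton's equations, observe that among the three terms of $\bar H$ only the leading sum $\sum_{i\in{\cal N}_1}h_i^*(y_i(t))$ depends on $y_i$, so
\begin{align}
\frac{\partial \bar H}{\partial y_i}=\nabla h_i^*(y_i(t))=x_i(t)=\frac{d}{dt}X_i(t),
\end{align}
using $x_i=\nabla h_i^*(y_i)$ together with $X_i(t)=\int_0^t x_i(s)\,ds$. This step is immediate and needs nothing beyond convexity and compactness of ${\cal X}_i$.

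The substance is the second equation $-\partial\bar H/\partial X_i=\dot y_i$. First I would record the identification that, for $j\in{\cal N}_2$, the argument of $h_j^*$ in $\bar H$ is precisely the cumulative payoff $y_j(t)=y_j(0)+\sum_{i\in{\cal N}_1}\bigl(A^{(ji)}X_i(t)+b^{(ji)}t\bigr)$, so that $\nabla h_j^*(\cdot)=x_j(t)$. Differentiating the $\sigma$-term by the chain rule contributes $-\sigma\sum_{j\in{\cal N}_2}\bigl(A^{(ji)}\bigr)^\intercal x_j(t)$, while the correction term $-\sum_{i\in{\cal N}_1}\sum_{j\in{\cal N}_2}b^{(ij)}\cdot X_i(t)$ contributes $-\sum_{j\in{\cal N}_2}b^{(ij)}$. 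Negating and invoking the defining relation $A^{(ij)}=\sigma\bigl(A^{(ji)}\bigr)^\intercal$ collapses the matrix term to $\sum_{j\in{\cal N}_2}A^{(ij)}x_j(t)$, giving
\begin{align}
-\frac{\partial\bar H}{\partial X_i}=\sum_{j\in{\cal N}_2}\bigl(A^{(ij)}x_j(t)+b^{(ij)}\bigr)=\frac{d}{dt}y_i(t),
\end{align}
which is exactly the drift of $y_i$ in (\ref{eqn:FTRL2}). Since both identities of (\ref{eqn:HDyanamic}) hold, the system is Hamiltonian, and invariance of $\bar H$ then follows from conservation of energy for Hamiltonian flows (or, alternatively, from a direct $d\bar H/dt$ computation as in Theorem~\ref{thm:2energy}).

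The one genuinely delicate point, and the step I expect to demand the most care, is the bookkeeping around the affine terms $b^{(ij)}$. Two things must align at once: the argument of $h_j^*$ must be identified with the \emph{full} $y_j(t)$, including its linear-in-$t$ drift $b^{(ji)}t$, so that $\nabla h_j^*$ returns the genuine strategy $x_j(t)$; and the separate correction term $-\sum_i\sum_j b^{(ij)}\cdot X_i$ must be present to supply the missing affine piece $\sum_j b^{(ij)}$ of $\dot y_i$ that the $\sigma$-term cannot produce on its own. This correction is precisely the device that removes the additional energy injected by $b^{(ij)}$ and restores the closed-system cancellation; without it the analogue of the zero-sum argument fails. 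The remaining case distinctions cost nothing extra: $\sigma=+1$ versus $\sigma=-1$ are handled uniformly through $A^{(ij)}=\sigma(A^{(ji)})^\intercal$, and two-agent versus bipartite has already been reduced to the single verification above.
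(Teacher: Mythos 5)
Your verification of Hamilton's equations is correct and is essentially the paper's own (implicit) argument---the paper gives no separate proof of this theorem, deferring to the meta-agent reduction of Corollary~\ref{cor:bipartite} and the differentiation in Theorems~\ref{thm:TwoAgent} and~\ref{thm:Idenity}, exactly as you do. The genuine gap is the final invariance step. The identification you rightly insist on---reading the argument of $h_j^*$ as the full $y_j(t)=y_j(0)+\sum_{i\in{\cal N}_1}\bigl(A^{(ji)}X_i(t)+b^{(ji)}t\bigr)$, which is what makes $\nabla h_j^*$ return $x_j(t)$ and the second Hamilton equation come out right---puts an explicit $t$ into $\bar H$. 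Conservation of energy along a Hamiltonian flow is a theorem about \emph{autonomous} Hamiltonians; for a time-dependent one the symplectic cancellation still kills the $(X,y)$-terms, but the chain rule leaves $\frac{d\bar H}{dt}=\frac{\partial \bar H}{\partial t}$, and here
\begin{align}
\frac{\partial \bar H}{\partial t}=-\sigma\sum_{j\in{\cal N}_2}x_j(t)^\intercal\Bigl(\sum_{i\in{\cal N}_1}b^{(ji)}\Bigr),
\end{align}
which is not zero in general. Your fallback, a direct computation as in Theorem~\ref{thm:2energy}, produces exactly this residue: the correction term $-\sum_{i}\sum_{j} b^{(ij)}\cdot X_i$ cancels the energy injected by the ${\cal N}_1$-side drifts $b^{(ij)}$, but nothing cancels the ${\cal N}_2$-side drifts $b^{(ji)}$. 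The natural compensating term would be $\sigma\sum_{j\in{\cal N}_2}\sum_{i\in{\cal N}_1}b^{(ji)}\cdot X_j(t)$, but $X_j$ for $j\in{\cal N}_2$ is not a function of the one-sided phase variables $(X_i,y_i)_{i\in{\cal N}_1}$, so it cannot be absorbed into a Hamiltonian in these coordinates.

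That this is a real failure and not bookkeeping is visible in the smallest example: $\sigma=-1$, one agent per side, ${\cal X}_1={\cal X}_2=[0,1]$, $h_i(x_i)=x_i^2/2$, $A^{(12)}=1$ (hence $A^{(21)}=-1$), $b^{(12)}=-\tfrac12$, $b^{(21)}=\tfrac12$. On the interior, (\ref{eqn:FTRL2}) gives $\dot y_1=y_2-\tfrac12$ and $\dot y_2=-y_1+\tfrac12$, whose orbits are circles about $(\tfrac12,\tfrac12)$, so the true conserved quantity is $(y_1-\tfrac12)^2+(y_2-\tfrac12)^2$; meanwhile $\bar H$ evaluates along such an orbit to $\tfrac12 y_1^2+\tfrac12 y_2^2+\tfrac12 X_1$, with $\frac{d\bar H}{dt}=\tfrac12 y_2>0$, so $\bar H$ strictly increases even though both of Hamilton's equations hold. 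Thus your argument establishes the ``Hamiltonian'' half of the statement but not the ``invariant'' half; invariance needs an additional hypothesis such as $\sum_{i\in{\cal N}_1}b^{(ji)}=0$ for every $j\in{\cal N}_2$ (e.g., all affine terms moved to the ${\cal N}_1$ side). The same tension is latent in the paper's printed formula for $\bar H$, which omits the factor $t$ on $b^{(ji)}$: under that reading $\bar H$ is autonomous, but then $\nabla h_j^*$ is evaluated at a point other than $y_j(t)$ and the second Hamilton equation fails instead---so neither reading lets both halves of your proof go through simultaneously.
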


It may not be immediately clear how the results on (\ref{eqn:general}) fits into the literature on learning theory.
Certainly, the results imply our results in the setting of (\ref{eqn:GeneralGame}) by taking $b^{(ij)}=d^{(ij)}=c^{(ij)}=0$ and ${\cal X}_i$ to be the standard simplex. 
However, there is a more nuanced reduction from (\ref{eqn:GeneralGame}) to (\ref{eqn:general}) that allows us to extend our results to a larger class of network games.

In (\ref{eqn:GeneralGame}), the strategy space is ${\cal X}_i=\{x_i\in \mathbb{R}^{|S_i|}_{\geq 0}: \sum_{s_i\in{\cal S}_i} x_{is_i}=1\}$.  
After selecting $s_i\in {\cal S}_i$ arbitrarily for each $i$, substituting $x_{is_i}=1-\sum_{s_i'\in{\cal S}_i\setminus\{s_i\}} x_{is'_i}$ into (\ref{eqn:GeneralGame}) yields
\begin{align}
	\max_{\bar{x}_i\in \bar{\cal X}_i}\left\{\sum_{j\neq i}\left( \bar{x}_i^\intercal \bar{A}^{(ij)} \bar{x}_j+b^{(ij)}\cdot \bar{x}_i +d^{(ij)}\cdot \bar{x}_j+c^{(ij)}\right)\right\} \ \forall i
\end{align}
where $\bar{\cal X}_i=\{\bar{x}_i\in \mathbb{R}^{|{\cal S}_i|-1}_{\geq 0}: \sum_{s'_i\in {\cal S}_i\setminus \{s_i\}} \bar{x}_{is'_i}\leq 1\}$. 
It is straightforward to verify that this substitution preserves the dynamics $\{x^t\}_{t=0}^\infty$ since the optimizer of a function does not change after a variable substitution. 
Under this formulation, the dynamics are Hamiltonian if $\bar{A}^{(ij)}=\sigma \left(\bar{A}^{(ji)}\right)^\intercal$ even if the condition does not hold for the original payoff matrix $A^{(ij)}$. 
As a result, almost every two agent, two strategy game admits Hamiltonian dynamics under (\ref{eqn:FTRL}).

\begin{corollary}\label{cor:2x2}
	Suppose ${\cal X}_i=\{x_i\in \mathbb{R}^2_{\geq 0}: x_{i1}+x_{i2}=1\}$. 
	Then almost every two-agent, two-strategy (\ref{eqn:GeneralGame}) can be expressed as a coordination or zero-sum version of (\ref{eqn:general}) and therefore has Hamiltonian dynamics. 
\end{corollary}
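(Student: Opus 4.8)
The plan is to combine the variable-elimination reduction already set up above with a learning-rate (payoff-rescaling) symmetry of (\ref{eqn:FTRL2}) and then invoke Theorem \ref{thm:GeneralBipartite}. First I would perform the substitution $x_{i2}=1-x_{i1}$ for both agents and write $\bar{x}_i = x_{i1}\in\bar{\cal X}_i=[0,1]$. Expanding the bilinear payoffs $x_1^\intercal A^{(12)}x_2$ and $x_2^\intercal A^{(21)}x_1$ in these scalar variables puts each agent's utility into the form (\ref{eqn:general}) with $1\times 1$ coupling ``matrices''
\[
\bar{A}^{(12)}=a_{11}-a_{12}-a_{21}+a_{22},\qquad \bar{A}^{(21)}=a'_{11}-a'_{12}-a'_{21}+a'_{22},
\]
where $A^{(12)}=(a_{k\ell})$ and $A^{(21)}=(a'_{k\ell})$, together with the induced linear and constant terms $b^{(ij)},d^{(ij)},c^{(ij)}$. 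Since $\bar{\cal X}_i=[0,1]$ is convex and compact, Theorem \ref{thm:GeneralBipartite} will apply to this two-agent (\ref{eqn:general}) as soon as the symmetry condition $\bar{A}^{(12)}=\sigma\,\bar{A}^{(21)}$ holds for some $\sigma\in\{-1,+1\}$.

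The key point is that, generically, $\bar{A}^{(12)}$ and $\bar{A}^{(21)}$ are independent nonzero scalars that need not satisfy $\bar{A}^{(12)}=\pm\bar{A}^{(21)}$, so I would create the needed symmetry using a rescaling that does not alter the dynamics. Concretely, replacing $\big(A^{(21)},\,b^{(21)},\,h_2,\,y_2(0)\big)$ by $\big(\lambda A^{(21)},\,\lambda b^{(21)},\,\lambda h_2,\,\lambda y_2(0)\big)$ for any $\lambda>0$ leaves the trajectory $\{x(t)\}$ of (\ref{eqn:FTRL2}) unchanged: it merely reparametrizes agent $2$'s motion as $y_2\mapsto \lambda y_2$, and since $(\lambda h_2)^*(z)=\lambda h_2^*(z/\lambda)$ one has $\nabla(\lambda h_2)^*(\lambda y_2)=\nabla h_2^*(y_2)$, so $x_2(t)$ is identical while $x_1(t)$ depends only on unchanged data. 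This rescaling multiplies agent $2$'s coupling to $\lambda\bar{A}^{(21)}$ while keeping $\bar{A}^{(12)}$ fixed. Provided $\bar{A}^{(12)}\neq 0$ and $\bar{A}^{(21)}\neq 0$, I would then set $\sigma=\operatorname{sign}\!\big(\bar{A}^{(12)}\bar{A}^{(21)}\big)$ and $\lambda=\bar{A}^{(12)}/\big(\sigma\,\bar{A}^{(21)}\big)>0$, which yields $\bar{A}^{(12)}=\sigma\,(\lambda\bar{A}^{(21)})$; this is a coordination game ($\sigma=1$) when the two couplings share a sign and a zero-sum game ($\sigma=-1$) when they differ. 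Theorem \ref{thm:GeneralBipartite} then gives Hamiltonian dynamics, and because every step preserved $\{x(t)\}$, the original $2\times 2$ game is Hamiltonian under (\ref{eqn:FTRL}).

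It remains to identify the excluded ``almost never'' set, which is exactly $\{\bar{A}^{(12)}\bar{A}^{(21)}=0\}$, a measure-zero algebraic subset of the payoff entries. When, say, $\bar{A}^{(12)}=0$, agent $1$'s motion obeys $\dot{y}_1=\bar{A}^{(12)}\bar{x}_2+b^{(12)}=b^{(12)}$, so agent $1$'s learning and strategic optimum are independent of $\bar{x}_2$; this is precisely the degenerate case flagged in the text where one agent's payout does not depend on the other's action. The main obstacle I anticipate is not the algebra of the reduction but a clean justification of the rescaling invariance: I must verify that $\lambda h_2$ is still an admissible (strongly convex, continuously differentiable) regularizer for $\lambda>0$ and that the conjugate identity $\nabla(\lambda h_2)^*(\lambda y)=\nabla h_2^*(y)$ holds, so that the rescaled system is genuinely an instance of a coordination or zero-sum (\ref{eqn:general}) with identical observable trajectories rather than merely an analogous one.
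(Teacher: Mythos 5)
Your proposal is correct and follows essentially the same route as the paper's own proof: eliminate $x_{i2}=1-x_{i1}$, note that the coupling scalars $a+d-b-c$ and $\alpha+\delta-\beta-\gamma$ are generically nonzero, rescale agent 2's payoff, regularizer, and initial payoff vector by a positive constant to force $\bar{A}^{(12)}=\sigma\bar{A}^{(21)}$, and invoke the generalized-game theorem. Your version is in fact slightly more careful on two points the paper glosses over: you justify the rescaling invariance explicitly via the conjugate identity $\nabla(\lambda h_2)^*(\lambda y_2)=\nabla h_2^*(y_2)$, and you cite Theorem \ref{thm:GeneralBipartite} rather than Theorem \ref{thm:GeneralGame}, which is the right call since the latter is stated only for zero-sum games and the resulting game may be a coordination game ($\sigma=1$).
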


\begin{proof} Suppose the agents' payoff matrices are given by 
\begin{align}
	A^{(12)}= \left( \begin{array}{c c} a & b \\ c & d \end{array} \right)	\hspace{.2in}A^{(21)}= \left( \begin{array}{c c} \alpha & \beta \\ \gamma & \delta \end{array} \right).
\end{align}
After performing the substitution $x_{i2}=1-x_{i1}$, the two-agent game can be rewritten as 
\begin{align}
\max_{{x}_{11}\in [0,1]}& \{ (a+d-b-c){x}_{11}{x}_{21}+(b-d){x}_{11} + (c-d){x}_{21}+d  \} \label{eqn:1}\\
\max_{{x}_{21}\in [0,1]}& \{ (\alpha+\delta-\beta-\gamma){x}_{11}{x}_{21}+(\beta-\delta){x}_{21} + (\gamma-\delta){x}_{11}+\delta  \}.\label{eqn:2}
\end{align}

When performing this variable substitution, we must also update our definitions of $y_i(0)$ and $h_i$: 
\begin{align}
x_i(0)	&=\argmax_{x_i\in {\cal X}_i}\{ \langle x_i, y_i(0)\rangle -h_i(x_i)\}\\
		&=\argmax_{x_i\in {\cal X}_i}\{ x_{i1}\cdot y_{i1}(0)+x_{i2}\cdot y_{i2}(0) -h_i(x_{i1},x_{i2})\}\\
		&=\argmax_{x_i\in {\cal X}_i}\{ x_{i1}\cdot y_{i1}(0)+(1-x_{i1})\cdot y_{i2}(0) -h_i(x_{i1},1-x_{i1})\}\\
		&=\argmax_{x_i\in {\cal X}_i}\{ x_{i1}\cdot (y_{i1}(0)-y_{i2}(0)) -h_i(x_{i1},1-x_{i1})\}
\end{align}
Therefore the new regularizer is $\bar{h}_i(x_{i1})=h(x_{i1},1-x_{i1})$ and the new initial payoff vector is $\bar{y}_i(0)=y_{i1}(0)-y_{i2}(0)$.  
It is straightforward to verify that these substitutions yield the same dynamics $x_i(t)$. 
For simplicity, we write $x_{i1}$ as $x_i$, $\bar{y}_i$ as $y_i$, $\bar{h}_i$ as $h_i$.
Further, we rewrite (\ref{eqn:1}-\ref{eqn:2}) as
\begin{align}
\max_{{x}_1\in [0,1]}& \{a^1{x}_{1}{x}_{2}+b^1{x}_{1} + d^1{x}_{2}+c^1  \}\\
\max_{{x}_2\in [0,1]}& \{ a^2{x}_{1}{x}_{2}+b^2{x}_{2} + d^2{x}_{1}+c^2  \}.
\end{align}

For Theorem \ref{thm:GeneralGame} to apply to this game, we need for $a^1=\sigma a^2$ for some $\sigma \in \{\pm 1 \}$.
To accomplish this, we will further modify agent $2$'s payoff function, initial payoff vector $\bar{y}_2(0)$, and regularizer $\bar{h}_2$ in a way that preserves the dynamics of (\ref{eqn:FTRL2}). 
First, observe that for almost every two-agent, two-strategy game, $a^1=a+d-b-c\neq 0$ and $a^2=\alpha+\delta-\beta-\gamma\neq 0$.
Therefore, we can rewrite agent 2's portion of the game as
\begin{align}
&\max_{{x}_2\in [0,1]} \left\{ \frac{|a^1|}{|a^2|}a^2{x}_{1}{x}_{2}+\frac{|a^1|}{|a^2|}b^2{x}_{2} + \frac{|a^1|}{|a^2|}d^2{x}_{1}+\frac{|a^1|}{|a^2|}c^2  \right\}\\
=& \max_{{x}_2\in [0,1]} \left\{ \sigma a^1{x}_{1}{x}_{2}+\bar{b}^2{x}_{2} + \bar{d}^2{x}_{1}+\bar{c}^2  \right\}
\end{align}
where $\sigma\in \{\pm 1\}$.  
To preserve the dynamics $x_2(t)$, we must pass the change in the payoff function into the initial payoff vector $y_2(0)$ and regularizer $h_2$.  
This is accomplished with  $\bar{y}_2=\frac{|a^1|}{|a^2|}y_2$ and $\bar{h}_2=\frac{|a^1|}{|a^2|}h_2$. 
Once again, the dynamics $\{x^t\}_{t=0}^\infty$ are preserved since the substitutions rescale the maximization problem by the positive constant $\frac{|a^1|}{|a^2|}$.
Thus, after performing these substitutions, the dynamics are Hamiltonian by Theorem \ref{thm:GeneralGame}. 
\end{proof}

\section{Conclusions}

We have shown that online learning in games is a Hamiltonian dynamical system.
Moreover, this characterization  provides a deeper understanding of games played in both continuous and discrete time. 
In fact, several of the recent results on online learning can be described simply as a byproduct of the Hamiltonian behavior. 
Research on Hamiltonian dynamics is rich and encompasses far more than what is discussed in this paper.
What are the other consequences of Hamiltonian dynamics and what do they mean about online learning in games?

\section*{Acknowledgements}

James P. Bailey and Georgios Piliouras acknowledge SUTD grant SRG ESD 2015 097, MOE AcRF Tier 2 Grant 2016-T2-1-170,  grant PIE-SGP-AI-2018-01 and NRF 2018 Fellowship NRF-NRFF2018-07.

\bibliographystyle{acm}  
\bibliography{IEEEabrv,Bibliography,refer}

\end{document}